\newcommand{\shortVersion}{false}
\newtheoremstyle{sig}
  {}
  {}
  {\itshape}
  {}
  {\scshape}
  {.}
  {.5em}
  {#1\@ifnotempty{#2}{ #2}\thmnote{\quad(#3)}}%
\theoremstyle{sig}
\tikzstyle{printersafe}=[snake=snake,segment amplitude=0 pt]
\newlist{longitemize}{itemize}{15}
\setlist[longitemize,1]{label=\textbullet}
\setlist[longitemize,2]{label=\textbullet}
\setlist[longitemize,3]{label=\textbullet}
\setlist[longitemize,4]{label=\textbullet}
\setlist[longitemize,5]{label=\textbullet}
\setlist[longitemize,6]{label=\textbullet}
\setlist[longitemize,7]{label=\textbullet}
\setlist[longitemize,8]{label=\textbullet}
\setlist[longitemize,9]{label=\textbullet}
\setlist[longitemize,10]{label=\textbullet}
\setlist[longitemize,11]{label=\textbullet}
\setlist[longitemize,12]{label=\textbullet}
\setlist[longitemize,13]{label=\textbullet}
\setlist[longitemize,14]{label=\textbullet}
\setlist[longitemize,15]{label=\textbullet}
\newcommand{\techReportAppendix}[1]{%
	\ifthenelse{\equal{\shortVersion}{true}}%
		{\cite{techReport}}%
        {Appendix~\ref{#1}}%
}%
\newcommand{\techReportAppendices}[2]{%
	\ifthenelse{\equal{\shortVersion}{true}}%
		{\cite{techReport}}%
		{Appendices~\ref{#1}--\ref{#2}}%
}%
\newcommand{\onlyTechReport}[1]{%
	\ifthenelse{\equal{\shortVersion}{true}}%
		{}%
		{#1}%
}%
\newcommand{\onlyShortVersion}[1]{%
	\ifthenelse{\equal{\shortVersion}{true}}%
		{#1}%
		{}%
}%
\newcommand*\@lbracket{[}
\newcommand*\@rbracket{]}
\newcommand*\@colon{:}
\newcommand*\colorIndex{%
    \edef\@temp{\the\lst@token}%
    \ifx\@temp\@lbracket \color{black}%
    \else\ifx\@temp\@rbracket \color{black}%
    \else\ifx\@temp\@colon \color{black}%
    \else \color{orange}%
    \fi\fi\fi
}
\lstdefinestyle{verilog-style}
{
    language=Verilog,
    basicstyle=\footnotesize\ttfamily,
    keywordstyle=\color{blue},
    identifierstyle=\color{black},
    commentstyle=\color{cadmiumgreen},
    numbers=left,
	morekeywords = [2]{leak, on, monitor},
	keywordstyle = [2]\color{punct}\ttfamily,
    numberstyle=\tiny\color{black},
    extendedchars=true,
    numbersep=10pt,
    tabsize=4,
	frame=none,
    moredelim=*[s][\colorIndex]{[}{]},
    literate=*{:}{:}1,
    xleftmargin=5.0ex,
    captionpos=b,
    escapechar=\$,
    escapeinside={(*}{*)}
}
\colorlet{punct}{red!60!black}
\definecolor{delim}{RGB}{20,105,176}
\colorlet{numb}{magenta!60!black}
\definecolor{lightgreen}{HTML}{669900}		%
\definecolor{bluegreen}{HTML}{33997e}		%
\definecolor{brightube}{rgb}{0.82, 0.62, 0.91}
\definecolor{aquamarine}{rgb}{0.5, 1.0, 0.83}
\definecolor{cadmiumgreen}{rgb}{0.0, 0.42, 0.24}
\newcommand{\mypara}[1]{\smallskip\noindent\emph{\textbf{{#1.}}}}
\newcommand{\var}[1]{\mbox{\texttt{#1}}\xspace}
\newcommand{\theSet}[1]{\{#1\}}
\newcommand{\ie}{\emph{i.e.},\xspace}
\newcommand{\eg}{\emph{e.g.},\xspace}
\providecommand{\eqdef}{\triangleq{}}
\definecolor{ao(english)}{rgb}{0.0, 0.5, 0.0}
\definecolor{royalblue(web)}{rgb}{0.25, 0.41, 0.88}
\newcommand{\setCommentColor}[1]{%
	\ifthenelse{\equal{#1}{bk}}%
		{\colorlet{colorVar}{red!50}}%
		{\ifthenelse{\equal{#1}{kg}}%
			{\colorlet{colorVar}{blue}}%
			{\ifthenelse{\equal{#1}{mg}}%
				{\colorlet{colorVar}{ao(english)}}%
			{\ifthenelse{\equal{#1}{jr}}%
				{\colorlet{colorVar}{magenta}}%
				{\ifthenelse{\equal{#1}{zw}}%
					{\colorlet{colorVar}{orange}}%
					{\ifthenelse{\equal{#1}{gm}}%
						{\colorlet{colorVar}{cyan}}
						{}
					}%
				}%
			}%
		}%
	}%
}
\newcommand{\commentAuthor}[1]{%
	\ifthenelse{\equal{#1}{bk}}%
		{Boris:\ }%
		{\ifthenelse{\equal{#1}{kg}}%
			{Klaus:\ }%
			{\ifthenelse{\equal{#1}{mg}}%
				{Marco:\ }%
			{\ifthenelse{\equal{#1}{jr}}%
				{Jan:\ }%
				{\ifthenelse{\equal{#1}{zw}}%
					{Zilong:\ }%
					{\ifthenelse{\equal{#1}{gm}}%
						{Gideon:\ }%
						{}
					}%
				}%
			}%
		}%
	}%
}
\definecolor{codegray}{gray}{0.95}
\theoremstyle{definition}
\newtheorem{definition}{Definition}
\newtheorem{example}{Example}
\theoremstyle{theorem}
\newtheorem{theorem}{Theorem}
\newcommand{\emptysequence}{\varepsilon}
\newcommand{\concat}{\cdot}
\renewcommand{\paragraph}[1]{\smallskip\noindent\textbf{#1:\ }}
\newcommand{\arch}{\mathsf{arch}}
\newcommand{\Var}{\mathit{Regs}}
\newcommand{\Val}{\mathit{Vals}}
\newcommand{\Nat}{\mathbb{N}}
\newcommand{\kywd}[1]{\mathbf{#1}}
\definecolor{Blue3}{HTML}{0000CD}
\newcommand{\obsKywd}[1]{\textcolor{Blue3}{\mathtt{#1}}}
\newcommand{\startObsKywd}[1]{\obsKywd{start}}
\newcommand{\commitObsKywd}[1]{\obsKywd{commit}} 
\newcommand{\rollbackObsKywd}[1]{\obsKywd{rollback}}
\newcommand{\commitObs}[1]{\commitObsKywd{}} %
\newcommand{\rollbackObs}[1]{\rollbackObsKywd{}} %
\newcommand{\unaryOp}[1]{\ominus #1}
\newcommand{\binaryOp}[2]{#1 \otimes #2}
\newcommand{\passign}[2]{#1 \leftarrow #2}
\newcommand{\ite}[3]{\mathbf{if} \; #1 \; \mathbf{th} \; #2 \; \mathbf{el} \; #3}
\newcommand{\fetch}[1]{
\ifthenelse{\equal{#1}{}}{\kywd{fetch}}{\kywd{fetch}}%
}
\definecolor{Blue3}{HTML}{0000CD}
\definecolor{Green4}{HTML}{008B00}
\definecolor{Red3}{HTML}{CD0000}
\definecolor{orange}{rgb}{0.8, 0.47, 0.196}
\lstdefinestyle{Cstyle}
{
	frame = tb,
  belowskip=.4\baselineskip,
  aboveskip=.4\baselineskip,
  	showstringspaces = false,
  	breaklines = true,
  	breakatwhitespace = true,
  	tabsize = 3,
  	numbers = left,
    stepnumber = 1,
    numberstyle = \tiny\color{gray},
    language = {[ANSI]C},
    alsoletter={.\$},
    basicstyle={\ttfamily\color{black}},
    keywordstyle={\ttfamily\color{Blue3}},
    keywordstyle=[2]{\ttfamily\color{Green4}},
    keywordstyle=[3]{\ttfamily\color{orange}},
    keywordstyle=[4]{\ttfamily\color{violet}},
    otherkeywords = {skip,not},
    morekeywords = [2]{A,B},
    morekeywords = [3]{},
	morekeywords = [4]{y,x,z,w, size,size_A,k,temp},
	morecomment=[l][\small\itshape\color{purple!40!black}]{//},
	sensitive=true,
}
\newcommand*{\SavedLstInline}{}
\LetLtxMacro\SavedLstInline\lstinline
\DeclareRobustCommand*{\lstinline}{%
  \ifmmode
    \let\SavedBGroup\bgroup
    \def\bgroup{%
      \let\bgroup\SavedBGroup
      \hbox\bgroup
    }%
  \fi
  \SavedLstInline
}
\newcommand{\SimpISA}{\textsl{\textsc{sISA}}}
\newcommand{\SimpIMP}{\textsl{\textsc{sImpl}}}
\newcommand{\SimpLM}{\textsl{\textsc{sLM}}}
\newcommand{\SimpATK}{\textsl{\textsc{sAtk}}}
\newtheorem*{theorem*}{Theorem}
\newcommand{\tool}{\textsc{LeaVe}}
\newcommand{\product}[2]{#1 \times #2}
\newcommand{\future}{\circ}
\newcommand{\alwaysFuture}{\square}
\newcommand{\boundedFuture}[1]{{\alwaysFuture}^{#1}}
\newcommand{\lang}{\mu\textsc{Vlog}}
\newcommand{\val}{\mu}
\newcommand{\states}[1]{\mathit{states}(#1)}
\newcommand{\hwEval}[2]{\llbracket #1\rrbracket(#2)}
\newcommand{\hwEvalC}[3]{\llbracket #1\rrbracket_{#3}(#2)}
\newcommand{\hwEvalInfty}[2]{\llbracket #1\rrbracket^{\infty}(#2)}
\newcommand{\hwEvalInftyFilter}[3]{\llbracket #1\rrbracket^{\infty}|#3(#2)}
\newcommand{\readVars}[1]{\mathit{read}(#1)}
\newcommand{\writeVars}[1]{\mathit{write}(#1)}
\newcommand{\wireVars}[1]{\mathit{wires}(#1)}
\newcommand{\vars}[1]{\mathit{vars}(#1)}
\newcommand{\compose}[2]{#1 \triangleright #2}
\renewcommand{\compose}[2]{#1[#2]}		%
\newcommand{\archVars}{\textsl{\textsc{Arch}}}
\renewcommand{\arch}{\textsl{\textsc{ISA}}}
\newcommand{\uarch}{\textsl{\textsc{Impl}}}
\newcommand{\uarchVars}{\mu\archVars}
\newcommand{\isasat}[3]{#1 \equiv_{#3} #2}
\renewcommand{\isasat}[3]{#1 \vdash_{#3} #2}
\newcommand{\project}[2]{#1\mathord{\upharpoonright}_{#2}}
\newcommand{\atk}{\textsc{Atk}}
\renewcommand{\contract}{\textsl{\textsc{LM}}}
\newcommand{\leakorder}[4]{#1 \succeq_{#3}^{#4} #2}
\newcommand{\equivVars}[1]{\sim_{#1}}
\newcommand{\initStates}[1]{\mathit{init}(#1)}
\newcommand{\assignments}[1]{{#1}.A}
\newcommand{\outputs}[1]{{#1}.O}
\crefname{condition}{condition}{conditions}
\Crefname{condition}{Condition}{Conditions}
\crefname{appendix}{appendix}{appendices}
\Crefname{appendix}{Appendix}{Appendices}
\algrenewcommand\algorithmicrequire{\textbf{Input:}}
\algrenewcommand\algorithmicensure{\textbf{Output:}}
\newcommand{\ctrsat}[2]{#1 \sqsupseteq #2} %
\newcommand{\runA}[1]{{#1}^1}
\newcommand{\runB}[1]{{#1}^2}
\newcommand{\RelInvs}{\mathit{RI}}
\newcommand{\CandInvs}{\mathit{CI}}
\newcommand{\stutteringProduct}[2]{#1 \times_{#2} #1}
\title{Specification and Verification of Side-channel Security for Open-source Processors via Leakage Contracts}
\author{Zilong Wang}
\affiliation{%
  \institution{IMDEA Software Institute\\Universidad Polit\'ecnica de Madrid}
  \city{Madrid}
  \country{Spain}
}
\email{zilong.wang@imdea.org}
\author{Gideon Mohr}
\affiliation{%
  \institution{Saarland University}
  \city{Saarbr\"ucken}
  \country{Germany}
}
\email{s8gimohr@stud.uni-saarland.de}
\author{Klaus von Gleissenthall}
\affiliation{%
  \institution{Vrije Universiteit Amsterdam}
  \city{Amsterdam}
  \country{Netherlands}
}
\email{k.freiherrvongleissenthal@vu.nl}
\author{Jan Reineke}
\affiliation{%
\institution{Saarland University}
\city{Saarbr\"ucken}
\country{Germany}
}
\email{reineke@cs.uni-saarland.de}
\author{Marco Guarnieri}
\affiliation{%
\institution{IMDEA Software Institute}
\city{Madrid}
\country{Spain}
}
\email{marco.guarnieri@imdea.org}
\begin{document}

\begin{abstract}
    Leakage contracts have recently been proposed as a new security abstraction at the Instruction Set Architecture (ISA) level.
    Leakage contracts aim to capture the information that processors leak through their microarchitectural implementations.
    However, so far, we lack a methodology to verify that a processor actually satisfies a given leakage contract.
    
    In this paper, we address this challenge by developing \tool{}, the first tool for verifying register-transfer-level (RTL) processor designs against ISA-level leakage contracts.
    To this end, we show how to decouple security and functional correctness concerns. %
    \tool{} leverages this decoupling to make verification of contract satisfaction practical.
    To scale to realistic processor designs, \tool{} further employs inductive reasoning on relational abstractions. %
    Using \tool{}, we precisely characterize the side-channel security guarantees of three open-source RISC-V processors, thereby obtaining the first proofs of contract satisfaction for RTL processor designs.\looseness=-1
\end{abstract}

\keywords{Side channels, hardware verification, leakage contracts}

\maketitle

\section{Introduction}\label{sec:intro}

Microarchitectural attacks~\cite{spectre2019,Lipp2018,RIDL, Bulck2018, Yarom14}
compromise security by exploiting software-visible artifacts of
microarchitectural optimizations like caches and speculative execution. 
To use modern hardware securely, programmers must be aware of how these optimizations impact the security of their code.
Unfortunately, instruction set architectures (ISAs), the traditional abstraction layer between hardware and software, do not provide an adequate basis for secure programming: ISAs capture the functional behavior of processors but abstract away microarchitectural details and thus fail to capture their security implications.

To build secure software systems on top of modern hardware, we need a new abstraction at the ISA level that faithfully captures the information processors may leak through their microarchitectural implementations.
We refer to this new abstraction as \emph{leakage contracts}.
For example, the leakage contract underlying {constant-time programming}~\cite{almeida2016verifying}, used for writing cryptographic code, states that processors can leak a program's control flow and memory accesses, which therefore must not depend on secret data.

Recent work has made significant strides towards using leakage contracts as a basis for building secure systems, through their formal specification~\cite{contracts2021,mosier2022axiomatic}; through automatic security analysis of software~\cite{spectector2020,fabian202automatic,pitchfork, GuancialeBD20, blade}; and through post-silicon processor fuzzing~\cite{oleksenko2022revizor,oleksenko2023hide,buiras2021micro,Nemati2020a}.
However, leakage contracts can only unfold their full potential once hardware is available that \emph{provably satisfies such contracts}. %
The proliferation of open-source processors around the RISC-V ecosystem presents an opportunity to fill this gap. %

In this paper, we present the first approach for verifying register-transfer-level (RTL) processor designs against ISA-level leakage contracts.
This requires overcoming the following challenges:\looseness=-1
\begin{asparaitem}%
	\item Bridging the abstraction gap between sequential instruction-level leakage contracts and cycle-level processor designs that overlap the execution of multiple instructions.
	
    \item Leakage contracts capture a processor's information leakage on top of its functional specification. 
    	Verifying contract satisfaction, thus, requires reasoning about both functional and security aspects, which goes against the separation of these two concerns. %

	\item Even simple open-source processor designs have large and complex state spaces, which prohibit explicit enumeration or bounded model checking. %
\end{asparaitem}

Our verification approach and its implementation \tool{} overcome these challenges based on the following contributions:
\begin{asparaenum}%
\item We adapt the leakage contract framework from~\cite{contracts2021} to RTL processor designs, capturing instruction-level contracts and realistic cycle-level attacker models in a single uniform framework.
\item We introduce a decoupling theorem that separates security and functional correctness aspects for contract satisfaction.
\item We develop a verification algorithm for checking the security aspects of contract satisfaction that employs inductive reasoning on relational abstractions to scale to realistic processor designs.\looseness=-1
\item We implement and experimentally evaluate our approach on three open-source RISC-V processors. %
\end{asparaenum}
Next, we discuss these four contributions in more detail. 

\paragraph{Leakage contracts for RTL processors}
We adapt the leakage contract framework from~\cite{contracts2021} for RTL processor designs (\Cref{sec:model}).
This requires significant changes since the framework in \cite{contracts2021} builds on top of a simple sequential operational model of an out-of-order processor rather than on cycle-level RTL circuits.
In a nutshell, we model both the instruction-level leakage contract and the microarchitectural attacker as \emph{monitoring circuits}. %
These monitoring circuits generate {contract traces}, capturing the processor's \emph{intended leakage} at instruction level, and {attacker traces}, capturing its \emph{actual leakage} at microarchitectural level.
In this setting, a microarchitecture \emph{satisfies a contract} for a given attacker if the following holds: whenever two architectural states yield different attacker traces, then the two states also yield different contract traces.

\paragraph{Decoupling security and functional correctness}
We introduce a decoupling theorem~(\Cref{sec:verification:decoupling}) that separates security and functional correctness concerns for contract satisfaction.
For this, we introduce the notion of \emph{microarchitectural contract satisfaction} that refers only to the microarchitecture and ensures the absence of leaks.
The decoupling theorem states that, for processors correctly implementing the instruction set architecture, contract satisfaction and microarchitectural contract satisfaction are equivalent.
This allows us to focus \emph{only} on the security challenges arising from leakage verification, while relying on existing approaches for functional correctness~\cite{reid2016end, Huang19, Zeng21,burch1994automatic, khune2010automated,patankar1999formal,jhala12001microarchitecture}.

\paragraph{Verifying contract satisfaction}
We develop a novel algorithm for checking microarchitectural contract satisfaction (\Cref{sec:verification:algorithm}), which we prove sound.
That is,  whenever our algorithm concludes that a contract is satisfied, then microarchitectural contract satisfaction indeed holds.
Given a contract monitoring circuit and a microarchitecture, our approach inductively learns invariants associated with pairs of microarchitectural executions with the same contract traces  using invariant learning techniques~\cite{flanagan2001houdini} and uses these invariants to establish contract satisfaction.

\paragraph{Implementation and evaluation}
We implement our approach in \tool{}, a tool for verifying microarchitectural contract satisfaction for processor designs in Verilog (\Cref{sec:implementation}).
We validate our approach by precisely characterizing the side-channel security guarantees of three open-source RISC-V processors in multiple configurations (\Cref{sec:evaluation}).
For this, we define a family of leakage contracts capturing leaks through control flow, memory accesses, and variable-time instructions, and use \tool{} to determine which contracts each processor satisfies against an attacker observing when instructions retire.
Our evaluation confirms that \tool{} can be used to effectively verify side-channel security guarantees provided by open-source processors in less than 25 hours for our most complex targets. 
Our experiments also show that checking microarchitectural contract satisfaction (as enabled by our decoupling theorem) rather than on top of an architectural reference model significantly speeds up verification (less than 2 hours versus 33 hours for a simple 2-stage processor), allowing us to scale verification to realistic processors. %

\paragraph{Bonus material}
The \tool{} verification tool is available at~\cite{leave-artifact}.
\onlyShortVersion{
An extended version of this paper containing the full formal model and proofs of technical results is available at~\cite{techReport}.
}

\newcommand{\exuarch}{\var{P}}
\newcommand{\exarch}{\var{R}}
\newcommand{\exatk}{\var{A}}
\newcommand{\excontract}{\var{L}}

\renewcommand{\exuarch}{\lstinline{IMPL}}
\renewcommand{\exarch}{\lstinline{ISA}}
\renewcommand{\exatk}{\lstinline{ATK}}
\renewcommand{\excontract}{\lstinline{LM}}

\section{Overview}\label{sec:overview}

\begin{figure}
    \begin{lstlisting}[style={verilog-style}]
    module ISA(input clk, output register);
        
    reg [31:0]  imem [31:0], pc, register;
    wire [31:0] instr = imem[pc]; 
    
    assign op  = instr[7:0];  (* \label{code:spec-dec-start}*)
    assign imm = instr[31:8]; (* \label{code:spec-dec-end}*)
    
    always @(posedge clk) begin 
        pc <= pc + 1;	(* \label{code:spec-pc}*)
    end

    always @ (posedge clk) begin  (* \label{code:spec-ex-start}*)
        case(op)
        `ADD : register <= register + imm;
        `MUL : register <= register * imm;
        `CLR : register <= 0;
    end (* \label{code:monitor-end}*)
    \end{lstlisting}
    \vspace{-10pt}
    \caption{ISA reference model for our running example. \label{fig:reference}}
    \end{figure}

\begin{figure}
    \begin{lstlisting}[style={verilog-style}]
    module IMPL(input clk, output ready, register);
    reg [31:0] imem [31:0], pcF, register;
    
    // Decode
    wire [31:0] instr = imem[pcF];
    
    always @(posedge clk) begin (* \label{code:decode-start}*)
        ex_op  <= inst[7:0]; 
        ex_imm <= inst[31:8];
    end (* \label{code:decode-end}*)

    always @(posedge clk) begin (* \label{code:pc-update-start}*)
      if (ready) pcF <= pcF + 1; (* \label{code:pc-stall}*) 
    end (* \label{code:pc-update-end}*)
    
    assign pc = pcF-2; (* \label{code:spec-arch-pc}*) // Architectural pc

    // Execute 	
    assign ready = (!mult); (* \label{code:ready-flag}*)
    assign rd = we ? wb_res : register;// Forwarding (* \label{code:forwarding}*) 

    log_time_mult(mult, m_imm, m_rd, m_res, done);(* \label{code:mult-by-addition-module}*)
    
    always @ (posedge clk) begin  (* \label{code:execute-start}*) 
      if (ready)  
        case(ex_op)
          `ADD : wb_res <= rd + ex_imm; (* \label{code:addition}*)
                 we <= 1; mult <= 0; 
          `MUL : mult <= 1; we <= 0; (*\label{code:multiply-start}*)
                 m_rd <= rd; m_imm <= ex_imm; (* \label{code:addition-set-mul}*)          
          `CLR : we <= 1; mult <= 0; wb_res <= 0; (*\label{code:clear-start}*)     
      if (done) 
        mult <= 0; wb_res <= m_res; we <= 1;(* \label{code:multiply-set-zero}*) 
    end (* \label{code:execute-end}*) (*\label{code:multiply-end}*)
    
    // Write back
    always @ (posedge clk) begin (*\label{code:begin-writeback}*)
      if (we) // write enabled
        register <= wb_res; retired <= 1; (* \label{code:writeback}*)
      else 
        retired <= 0;
    end(*\label{code:end-writeback}*)
    \end{lstlisting}
    \vspace{-10pt}
    \caption{A simple processor that performs addition and multiplications. 
    The multiplication module \var{log_time_mul} leaks part of both  register value and immediate operand via timing.
    \label{fig:running}}
    \end{figure}

    Here, we illustrate the key points of our approach with an example.
We start by presenting a simple instruction set and the processor implementing it (\Cref{sec:overview:processor}).
Next, we show how microarchitectural leaks can be formalized using leakage contracts (\Cref{sec:overview:contracts}).
Finally, we illustrate how the \tool{} verification tool verifies that the contract is satisfied, thereby ensuring the absence of unwanted leaks (\Cref{sec:overview:verification}).
    
\subsection{A simple processor}\label{sec:overview:processor}
Next, we present the instruction set and processor implementation. %

\begin{figure}
    \begin{subfigure}[t]{0.3\columnwidth}
        \centering
         \begin{equation*}
             \begin{array}[c]{@{}|l@{\;\;}l|@{}}
                \cline{1-2} \rule{0pt}{3ex}
                \var{ADD} \;  2 & \var{MUL} \; 2 \\[1ex]
                \cline{1-2} \rule{0pt}{3ex}
                \var{ADD} \; 2 & \var{ADD} \; 2 \\[1ex]               
                \cline{1-2}
             \end{array}
         \end{equation*}
        \vspace{-5pt}
        \caption{}
        \label{fig:leak:a}
      \end{subfigure}
      \begin{subfigure}[t]{0.3\columnwidth}
        \centering
        \begin{equation*}
        \begin{array}[c]{@{}|l@{\;\;}l|@{}}
            \cline{1-2} \rule{0pt}{3ex}
            \var{ADD} \;  2 & \var{MUL} \; 1 \\[1ex]
            \cline{1-2} \rule{0pt}{3ex}
            \var{ADD} \; 2 & \var{MUL} \; 7  \\[1ex]
            \cline{1-2}              
         \end{array}
        \end{equation*} %
        \vspace{-5pt}
        \caption{\label{fig:leak:b}}
      \end{subfigure}
      \begin{subfigure}[t]{0.3\columnwidth}
        \centering
        \begin{equation*}
        \begin{array}[c]{@{}|l@{\;\;}l|@{}}
            \cline{1-2} \rule{0pt}{3ex}
            \var{ADD} \;  10 & \var{MUL} \; 2 \\[1ex]
            \cline{1-2} \rule{0pt}{3ex}
            \var{ADD} \;  2 & \var{MUL} \; 2 \\[1ex]
            \cline{1-2}               
         \end{array}
        \end{equation*}%
        \vspace{-5pt}
        \caption{\label{fig:leak:c}}
      \end{subfigure} 
      \vspace{-10pt}
\caption{Traces that leak via timing.\label{fig:leak}} 
\end{figure} 

\begin{figure}[ht]
    \begin{tabular}{rl}
    \parbox[c]{1pt}{
            \subcaption{\label{fig:safe:a}}
            }
&
             \parbox[l]{.4\linewidth}{\begin{equation*}
                    \begin{array}[c]{@{}|l@{\quad \;}l@{\quad \;}l@{\quad}|}
                        \cline{1-3} \rule{0pt}{3ex}
                        \var{ADD} \; 5 & \var{CLR} & \var{ADD} \; 4   \\[1ex]
                        \cline{1-3} \rule{0pt}{3ex}
                        \var{ADD} \; 11 & \var{ADD} \; 4 & \var{ADD} \; 9 \\[1ex]
                        \cline{1-3}
                    \end{array}
                \end{equation*}
            }
\\
\parbox[c]{1pt}{
    \subcaption{\label{fig:safe:b}}
    }
&
    \parbox[l]{.4\linewidth}{
        \begin{equation*}
        \begin{array}[c]{|@{\;\;}l@{\quad \;}l@{\quad \;}l@{\quad \;}l@{\quad}|}
            \cline{1-4} \rule{0pt}{3ex}
            \var{ADD} \; 1 & \var{CLR} & \var{ADD} \; 1     &\var{MUL} \; 3 \\[1ex]
            \cline{1-4} \rule{0pt}{3ex}
            \var{ADD} \; 5 & \var{CLR} & \var{ADD} \; 1 &\var{MUL} \; 3 \\[1ex]
            \cline{1-4}                
        \end{array}
        \end{equation*}
    }
    \\
    \parbox[c]{1pt}{
        \subcaption{\label{fig:safe:c}}
        }
&
        \parbox[l]{.4\linewidth}{
            \begin{equation*}
            \begin{array}[c]{|@{\;\;}l@{\quad \;}l@{\quad \;}l@{\quad}|}
                \cline{1-3} \rule{0pt}{3ex}
                \var{ADD} \;  10 & \var{ADD} \;  2  & \var{MUL} \; 2 \\[1ex]
                \cline{1-3} \rule{0pt}{3ex}
                \var{ADD} \;  7  & \var{ADD} \;  5 &  \var{MUL} \; 2 \\[1ex]             
                \cline{1-3}
            \end{array}
            \end{equation*}
        }
    \end{tabular}
\vspace{-10pt}
\caption{Traces that do not leak via timing. \label{fig:safe}}
\end{figure}

\mypara{Instruction set}
We consider an instruction set supporting addition and multiplication of immediates to a single register. 
Instructions consist of the instruction type (\var{ADD}, \var{MUL}, or \var{CLR}) and an
immediate value \var{imm}. 
\var{ADD} adds the immediate to the register value, whereas  \var{MUL} multiplies the register value by the immediate.
Finally, \var{CLR} resets the register to zero.

\Cref{fig:reference} depicts a Verilog reference model \exarch{} for our instruction set that executes one instruction per cycle.
Instructions are stored in the instruction memory \var{imem}.
\Cref{code:spec-dec-start,code:spec-dec-end} decode the instruction into operator (\var{ADD}, \var{MUL}, or \var{CLR}) and operand (immediate value).
Lines \ref{code:spec-ex-start} to \ref{code:monitor-end} case-split on the type of operation and update the register with the new value. 
Finally, \Cref{code:spec-pc} advances the program counter.\looseness=-1

\mypara{Pipelined implementation}
\Cref{fig:running} shows an implementation \exuarch{} of the instruction set that processes instructions in a three-stage pipeline.
If the pipeline is not stalled (flag \var{ready}), the processor starts by fetching a new instruction in line~\ref{code:pc-stall}.
As in \Cref{fig:reference}, the decode stage (lines \ref{code:decode-start} to \ref{code:decode-end}) decodes a new instruction into operator and immediate.
Next, the execute stage executes the decoded instruction (lines \ref{code:execute-start} to \ref{code:execute-end}).
The write-back stage updates the register with the result of the computation (lines \ref{code:begin-writeback} to \ref{code:end-writeback}).
This step is controlled by the write-enabled flag \var{we}. 
Finally, the processor performs forwarding from the execute to the write-back stage  (\cref{code:forwarding}).

Both \var{ADD} (\cref{code:addition}) and \var{CLR} (\cref{code:clear-start}) instructions are executed in a single cycle and their results are passed to the write-back stage. 

In contrast, \var{MUL} instructions (\cref{code:multiply-start} to \cref{code:multiply-end}) may take multiple cycles.
Multiplication starts in \cref{code:multiply-start} by setting register \var{mult} to $1$. 
This indicates that the processor cannot fetch new instructions (\cref{code:ready-flag}) and must stall the pipeline (\cref{code:pc-stall}). 
The processor then multiplies immediate and register value. 
This step is implemented in module \var{log_time_mult} (\cref{code:mult-by-addition-module}),
which we omit. 
The module takes time proportional to the logarithm of \var{m_rd}'s value, \ie{} the register value, to perform the multiplication.\footnote{This timing profile is similar to the Slow Multi-Cycle Multiplier from~\cite{ibex-mult}.}
It also has a fast path that completes the multiplication in a single cycle whenever 
operand or register are \var{0} or \var{1}.
Once multiplication terminates, \var{mul_res} contains the multiplication result and the processor stops stalling the pipeline by setting \var{mult} to $0$ (\cref{code:multiply-set-zero}) and passes the result to the write-back stage.\looseness=-1

\subsection{Specifying side-channel leakage}\label{sec:overview:contracts}
We now illustrate how to use leakage contracts to capture side-channel security guarantees for our example processor.

\mypara{Leakage}
To use the processor from~\Cref{fig:running} securely, we need to know what the
processor may leak to an attacker.
In the following, we consider an attacker that observes the value of the output-ready flag \var{ready} at each cycle, \ie{} it observes the pipeline's timing.

Assume that initially the register has value \var{0}. 
\Cref{fig:leak} shows pairs of instruction sequences that an attacker can distinguish. 
The sequences in \Cref{fig:leak:a} are distinguishable since the upper trace performs a multiplication while the lower trace does not, resulting in a timing difference. 
Similarly, the attacker can distinguish the traces in \Cref{fig:leak:b}, as the upper trace profits from the fast path in the multiplier, while the lower trace does not.
Even though the immediate operands to \var{MUL} are the same in \Cref{fig:leak:c}, the attacker can tell the sequences apart, as the register values are different.

In contrast, \Cref{fig:safe} shows pairs of instruction sequences that are indistinguishable for our 
attacker. 
\Cref{fig:safe:a} does not leak as it does not perform multiplication. 
\Cref{fig:safe:b} initially performs additions with different values, but resets the register state via \var{CLR} before \var{MUL}. 
Finally, \Cref{fig:safe:c} performs additions with different values that result in the same register state before \var{MUL}.

Next, we show how to capture leakage using monitors, which we use to formalize leakage contracts and attackers.

\mypara{Capturing leakage via monitors}
To use the processor securely, we need to distinguish program behaviors that leak from those that do not.
For this, we compose the reference model~\exarch\ (\Cref{fig:reference}), which captures the \emph{functional} behavior of the ISA, with a \emph{leakage monitor} \excontract\ shown below. 
The leakage monitor captures which information may be leaked upon executing instructions. 
The monitor takes as input a module $M$ representing the underlying circuit. 
We denote by \compose{\excontract}{$M$} the composition of \excontract\ and $M$ such that the composition hides $M$'s outputs, and \excontract\ can refer to (but cannot modify) $M$'s internal variables (\Cref{sec:model:leakage-contracts}).

In our example, the monitor leaks whether the operation that is performed is a multiplication or not (\var{ismul}). 
Whenever a \var{MUL} is executed, the monitor additionally leaks the register value (\var{r}) and whether the immediate is 0 or 1 (\var{isFP}), thereby capturing the leaks associated with the multiplier's fast path.
\begin{lstlisting}[style={verilog-style}]
monitor LM(module M, output leak)
  assign inst = M.imem[M.pc];
  assign r = M.register;
  assign op = inst[7:0];
  assign imm = inst[31:8];
  assign isFP = (imm==0 || imm==1);
  assign ismul = (op==`MUL);

  always @( * ) begin
    if (ismul) (* \label{code:monitor-start}*)
        leak = {r, isFP, ismul}; 
    else 
        leak = {0, 0, ismul};
  end
\end{lstlisting}
Note that $\{a, b, c\}$ is Verilog notation for the concatenation of signals $a$, $b$, and $c$.
Consider the leakage observations (\ie{} the values for \var{leak}) produced by $\compose{\excontract}{\exarch}$, \ie{} the leakage monitor applied to the reference model.
All pairs of sequences in \Cref{fig:safe} produce the same observations, whereas all pairs in \Cref{fig:leak} result in different observation traces. 
For example, in \Cref{fig:safe:a}, $\compose{\excontract}{\exarch}$ produces observations consisting of \var{\{0,0,0\}} for both instruction sequences. 
In contrast, for the second instruction of \Cref{fig:leak:a}, the upper sequence produces observation \var{\{2,0,1\}} but the lower one produces~\var{\{0,0,0\}}.

\newcommand*\circled[1]{\tikz[baseline=(char.base)]{
            \node[shape=circle,draw,inner sep=1pt] (char) {#1};}}

\mypara{Attacker observations}
Next, we define the observations an attacker can make about implementation \exuarch. 
Since we consider an attacker that can observe the timing of the computation, we define another monitor \exatk\ that simply exposes the ready bit.
\begin{lstlisting}[style={verilog-style}]
monitor ATK(module M, output leak)
    always @ ( * ) begin 
        leak = M.ready;
    end
\end{lstlisting}
The composition of attacker and implementation \compose{\exatk}{\exuarch} defines the \emph{actual information} an attacker may learn about the implementation.\looseness=-1

\mypara{Leakage contracts}
The composition \compose{\excontract}{\exarch} of leakage monitor and reference model defines a \emph{leakage contract} at the ISA level.
The contract characterizes leaks at the granularity of the execution of instructions from the instruction set, and it expresses which parts of the computation may be leaked by the hardware. 
For programmers, the contract provides a guideline for writing side-channel free code:
secrets should never influence leakage observations.
In our example, any two program executions that differ only in their secrets (\eg{} the initial register value) must produce indistinguishable traces. %

\mypara{Contract satisfaction}
The implementation $\exuarch$ \emph{satisfies} the contract $\compose{\excontract}{\exarch}$ under the attacker $\exatk$ whenever $\exuarch$ leaks no more than specified by the contract under $\exatk$.
That is, circuit $\compose{\exatk}{\exuarch}$ should leak no more than circuit $\compose{\excontract}{\exarch}$, denoted $\ctrsat{\compose{\excontract}{\exarch}}{\compose{\exatk}{\exuarch}}$.
That is, for any pair of initial architectural states for which \compose{\excontract}{\exarch} produces the same leakage observations, \compose{\exatk}{\exuarch} must produce the same attacker observations.
A formal definition of this relation is provided in \Cref{sec:model:leakage-contracts}.  %
For example, for all pairs of instruction sequences shown in \Cref{fig:safe}, \compose{\exatk}{\exuarch} must produce the same sequence of \var{ready} bits. 
In contrast, for the pairs of sequences in \Cref{fig:leak}, the sequence of ready bits may differ, but it does not have to.
Next, we describe our methodology to check that an implementation satisfies a contract.

\begin{figure*}[t]
\centering
\small
\begin{tikzpicture}[scale=1,every text node part/.style={align=center}]
    \node (topleft) at (13,0) {%
    $\compose{\exatk}{\exuarch}$};
    \node (topright) at (6.5,0) {%
    $\compose{\contract}{\exuarch}$};
    \node (bottomright) at (6.5,2) {%
    $\compose{\contract}{\exarch}$};
    \node (topleftleft) at (3,0) {Microarchitecture\\$\exuarch$};
    \node (bottomleftleft) at (3,2) {Instruction Set Architecture\\$\exarch$};

    \node[draw] (decoupling) at (5.7,1) {\large Decoupling\\(\Cref{thm:contract-sat-preconditions})};

    \node (left) at (3,1) {};
    \node (right) at (8,1) {};

    \path[<->] (8.7,0.1) -- node[above,sloped] {\huge $\Longleftrightarrow$} (8.9,1.1);

    \draw[dashed, thick] (left) -- (decoupling);
    \draw[dashed, thick] (decoupling) -- (8.5,0.65);
    
    \draw (topleft) -- (topright);
    \draw (topleft) -- (bottomright);
    
    \draw (topleft) -- node[below, sloped] {Microarch. Contract Satisfaction\\$\leakorder{\excontract}{\exatk}{\exuarch}{\phi}$ (\Cref{def:leakage-ordering})} (topright);
    \draw (topleftleft) -- node[left] {ISA Compliance\\$\isasat{\exuarch}{\exarch}{\phi}$ (\Cref{def:isa-compliance})} (bottomleftleft);
    \draw (topleft) -- node[above, sloped] {Contract Satisfaction\\$\ctrsat{\compose{\excontract}{\exarch}}{\compose{\exatk}{\exuarch}}$ (\Cref{def:contract-satisfaction})} (bottomright);
\end{tikzpicture}
\vspace{-7pt}
\caption{ISA compliance, contract satisfaction, and microarchitectural contract satisfaction.\label{fig:overview}}
\end{figure*}

\subsection{Verifying contract satisfaction}\label{sec:overview:verification}
Formally verifying contract satisfaction amounts to proving that $\ctrsat{\compose{\excontract}{\exarch}}{\compose{\exatk}{\exuarch}}$ holds.
This requires reasoning about pairs of infinite traces from $\compose{\excontract}{\exarch}$ and $\compose{\exatk}{\exuarch}$ for \emph{all} possible initial memories (including both data and instructions) and \emph{all} possible initial microarchitectural states. %
Beyond reasoning about security, this also implicitly requires to show that \exuarch\ correctly implements the ISA. %
In our example, functional correctness bugs in $\exuarch$ would often also result in contract violations as leakage observations are a function of the architectural state.
For instance, assume an incorrectly implemented \var{CLR} instructions that does not reset the register to~$0$.
Then the traces in \Cref{fig:safe:b} would likely be distinguishable via timing.\looseness=-1

While functional correctness is thus crucial for security, it needs to be verified independently of security concerns.
Indeed, there are many existing approaches~\cite{reid2016end,Huang19, Zeng21,burch1994automatic, khune2010automated,patankar1999formal,jhala12001microarchitecture} for checking ISA compliance. %
One of the contributions of this paper is to show how leakage and functional verification can be decoupled from each other, enabling a clean separation of functional and security concerns.

\mypara{ISA compliance} 
So, what does it mean for the implementation to comply with the ISA?
Intuitively, the implementation should go through the same sequence of architectural states as the reference model.
However, the reference model processes one instruction in each cycle, while the implementation overlaps the execution of multiple instructions and may or may not retire an instruction in any given cycle.
To bridge this gap, a \emph{retirement predicate} captures when the processor retires instructions and thus commits changes to the architectural state. 
A retirement predicate $\phi$ over implementation circuit \exuarch\ must satisfy the following constraint: whenever $\phi$ holds, \exuarch's current architectural state corresponds to a valid architectural state of the reference model, and no changes to the architectural state may occur when $\phi$ does not hold. %
For our example, the architectural variables are \var{pc}, \var{imem}, and \var{register} and $\phi \eqdef (retired = 1)$ is a valid retirement predicate. 
In fact, $\phi$ acts as a witness to the fact that \exuarch\ complies with the ISA defined by the reference model \exarch:
For any initial architectural state, \exarch\ transitions through the same sequence of architectural states as \exuarch\ does upon instruction retirement, \ie{} whenever $\phi$ holds. 
We denote this notion of ISA compliance by $\isasat{\exuarch}{\exarch}{\phi}$. %

\mypara{Decoupling leakage and functional correctness} 
Using the retirement predicate, we are able to decouple leakage from functional verification. 
To this end, we first define a \emph{filtered semantics} (in \Cref{sec:model:language}) that only considers states in which $\phi$ holds.
Since $\isasat{\exuarch}{\exarch}{\phi}$ implies that $\exuarch$'s architectural state matches $\exarch$'s whenever $\phi$ holds, the sequence of architectural states produced by the filtered semantics of~$\exuarch$ with respect to $\phi$ is equal to the sequence of states produced by~$\exarch$, assuming the processor is implemented correctly.\looseness=-1

Based on the filtered semantics, we can define the notion of \emph{microarchitectural contract satisfaction}:
To this end, we apply the leakage monitor $\excontract$ directly to $\exuarch$ and then relate $\compose{\excontract}{\exuarch}$ to $\compose{\exatk}{\exuarch}$, bypassing the reference model:
For all pairs of traces of $\compose{\excontract}{\exuarch}$, if contract observations (filtered using $\phi$) are the same, then $\compose{\exatk}{\exuarch}$'s observations must also be the same.
We denote this relation by $\leakorder{\excontract}{\exatk}{\exuarch}{\phi}$.

Our main theorem, \Cref{thm:contract-sat-preconditions} (in \Cref{sec:verification:decoupling}), states that if $\exuarch$ correctly implements $\exarch$ with respect to the retirement predicate $\phi$, then $\leakorder{\excontract}{\exatk}{\exuarch}{\phi}$ \emph{if and only if} $\ctrsat{\compose{\excontract}{\exarch}}{\compose{\exatk}{\exuarch}}$.
This means that we can analyze contract satisfaction purely based on the implementation $\exuarch$.
Figure~\ref{fig:overview} illustrates the main concepts and their relation. %

\mypara{Verification via inductive invariants}
\tool{}, our verification approach, verifies $\leakorder{\excontract}{\exatk}{\exuarch}{\phi}$ by approximating it via the following safety property:
Any two prefixes of traces that agree on their leakage observations also agree on attacker observations \emph{and} determine each instruction's retirement time.
A challenge in this formulation is that differences in attacker observations may surface before the corresponding differences in leakage observations due to pipelined execution and the fact that leakage observations corresponding to an instruction can only be evaluated upon instruction retirement.
We address this challenge by applying a bounded lookahead to the leakage observations.\looseness=-1

Checking this safety property requires appropriate inductive invariants, which would be tedious to come up with manually, in particular for complex designs.
Thus, we synthesize appropriate invariants from a pool of candidate relational invariants following the classic Houdini algorithm~\cite{flanagan2001houdini}. %

\newcommand{\WireVars}{\mathit{Vars}}
\newcommand{\wires}[1]{#1.W}

\section{Formal model}\label{sec:model}

In this section, we present the key components of our formal model.
We start by introducing $\lang{}$, a simple hardware description language (\Cref{sec:model:language}).
Next, we show how to formalize instruction set architectures and microarchitectures in $\lang{}$ (\Cref{sec:model:arch-and-uarch}).
We conclude by formalizing leakage contracts (\Cref{sec:model:leakage-contracts}).

\subsection{$\lang{}$: A Hardware Description Language}\label{sec:model:language}

$\lang$ is a language for specifying synchronous sequential circuits.
It captures the key features of hardware description languages like Verilog and VHDL, and we use it as the core language for \tool{}.

\mypara{Syntax}
The syntax of $\lang$ is given in \Cref{figure:language:syntax}.
\emph{Expressions} $e$ are built from values $\Val = \Nat \cup \{ \bot\}$, which are natural numbers or the designated value $\bot$, registers $\Var$, which store values, and variables $\WireVars$, which are shorthands for more complex expressions.
Expressions can be combined using unary operators $\unaryOp{e}$, binary operators $\binaryOp{e_1}{e_2}$, if-then-else operators $ \ite{e_1}{e_2}{e_3}$, and bit-selection operators $e_1[e_2:e_3]$.
An \emph{assignment} $\passign{r}{e}$ sets the next value of register $x$ to the
value of expression $e$ in the current cycle.
A \emph{wire} $v = e$ always has the value of expression $e$. %
Finally, a \emph{circuit}~$C$ consists of a set of assignments $A$, a set of wires $W$, and a set of
outputs $O \subseteq \Var \cup \WireVars$.

Given a circuit $C$, we refer to its assignments as $\assignments{C}$, to its wires as $\wires{C}$, and to its outputs as $\outputs{C}$.
The set $\readVars{C}$ of read registers consists of all registers $x$ that occur in at least one right-hand side of an assignment in $C.A$ or a wire in $\wires{C}$.
Similarly, the set $\writeVars{C}$ of write registers consists of all registers $x$ occurring in left-hand sides of assignments in $\assignments{C}$.
Finally, the set $\wireVars{C}$ of wire variables consists of all variables $v$  occurring in left-hand sides of wires in $\wires{C}$.
We assume that (1) $C.O \subseteq \vars{C} \cup \wireVars{C}$, where $\vars{C} = \readVars{C} \cup
\writeVars{C}$, (2) each register and variable is on the left-hand side of at most one
assignment or wire, and (3) wires in $\wires{C}$ do not introduce cyclic dependencies.

\begin{example}
    \label{ex:syntax}
    Consider the circuit~$\SimpISA$ given below.
    The circuit implements a simple ISA, in which instructions consist solely of immediate values $m[pc]$ that are
    retrieved from memory $m$ and added to the single internal register $reg$.\footnote{For simplicity, in the examples we treat memories as addressable arrays.
    For instance, $m[pc]$ denotes the value in $m$ at position $pc$. While this can be desugared in the syntax from \Cref{figure:language:syntax}, we decided against this to simplify our encodings.}
    \begin{equation*}
        \SimpISA = \{\passign{pc}{pc+1}, \passign{reg}{reg+m[pc]}\} : \{\} : \{ reg \}
    \end{equation*}
    We have $\vars{\SimpISA} = \readVars{\SimpISA} = \{pc, reg, m\}$ and $\writeVars{\SimpISA} = \{pc,
    reg\}$, and the single output $reg$; the circuit satisfies our assumptions.
\end{example}

\begin{figure}
	{\centering
	\begin{tabular}{llcl}
	\multicolumn{4}{l}{\bf Basic Types} \\
	\textit{(Registers)} 	&  $r$		& $\in$ & $\Var $ \\
    \textit{(Variables)} 	    &  $v$		& $\in$ & $\WireVars $ \\
    \textit{(Identifiers)}  &  $i$      & $\in$ & $\Var \cup \WireVars$ \\
    \textit{(Values)} 		&  $n$ 		& $\in$ & $\Val = \Nat \cup \{ \bot\}$  \\
	\multicolumn{4}{l}{\bf Syntax} \\
	\textit{(Expressions)} 	&  $e$		& $:=$ & $n \mid i \mid \unaryOp{e} \mid \binaryOp{e_1}{e_2}$ \\
    & & & $\mid \ite{e_1}   {e_2}{e_3} \mid e_1[e_2:e_3]$ \\
    \textit{(Wires)}        &  $w$ 		& $:=$ & $v = e$ \\
                            &  $W$      & $:=$ & $\{ w_1, \ldots, w_k\}$ \\
	\textit{(Assignments)} 	&  $a$ 		& $:=$ & $\passign{r}{e}$ \\
                            &  $A$      & $:=$ & $\{ a_1, \ldots, a_n\}$ \\
    \textit{(Outputs)}      &  $O$      & $:=$ & $\{ i_1, \ldots, i_m\}$ \\
    \textit{(Circuits)}     &  $C$ 		& $:=$ & $A : W : O$ 
	\end{tabular}
	}
    \vspace{-10pt}
\caption{$\lang$ syntax}\label{figure:language:syntax}
\end{figure}

\mypara{Semantics}
We formalize the semantics of $\lang$ circuits by specifying how their state is updated at each cycle.
We model the state of a circuit as a \emph{valuation} $\val$ that maps registers in $\Var$ to values in $\Val$, \ie{}  $\val: \Var \to \Val$.
Given a circuit $C$, $\states{C}$ denotes the set of all possible valuations over $\vars{C}$.
Given a valuation $\val$ and a set of registers $V$, the projection $\project{\val}{V}$ restricts the scope of $\val$ to the registers in $V$, \ie $\project{\val}{V}(x) = \val(x)$ for all $x \in V$ and  $\project{\val}{V}(x) =  \bot$ otherwise.
Finally, given two valuations $\val,\val'$ and a set of registers  $V$, $\val \equivVars{V} \val'$ denotes that $\val$ and $\val'$ agree on the values of all registers in $V$, \ie $\val \equivVars{V} \val'$ iff $\project{\val}{V} = \project{\val'}{V}$.

The \emph{semantics} $\llbracket C\rrbracket$ of a circuit $C$ takes as input a valuation $\val$ and outputs the valuation $\val'$ at the next cycle.
An \emph{execution} for $C$ starting from valuation $\val$ is the infinite sequence of valuations obtained by repeatedly applying $\llbracket C\rrbracket$.
The \emph{infinite trace semantics} $\llbracket C \rrbracket^\infty$ of a circuit $C$ maps each valuation $\val$ to the infinite sequence of valuations for $C$'s outputs, where the $i$-th valuation corresponds to the circuit's output after $i$ cycles.\footnote{With a slight abuse of notation, the trace semantics extend valuations to also record values of wires that are part of a circuit's outputs.}
Additionally, the \emph{filtered infinite trace semantics} $\llbracket C \rrbracket^\infty|\phi$ outputs only the valuations in $\llbracket C \rrbracket|\phi$ that satisfy a given predicate $\phi$ (other valuations are dropped).
Finally, $\hwEval{C}{\mu, i}$ denotes the valuation obtained by executing $C$ for $i$ cycles starting from valuation $\mu$, whereas $C,\mu \models \phi$ denotes that $\phi$ is satisfied for circuit $C$ and valuation $\mu$. 
The full formalization of $\lang$ is given in~\techReportAppendix{app:semantics}.

\begin{example}
    \label{ex:semantics}
    Consider again circuit $\SimpISA$ from \Cref{ex:syntax}.        
    Let us pick an initial valuation $\mu$, such that $\mu(pc)=0$, $\mu(reg)=0$, and
    \begin{equation*}
    \begin{array}[c]{@{}r@{\; \mathop{=} \;}l@{\qquad}l@{}}
        \mu(m)(i) & i & \text{for} \;\; 0 \leq i \leq 10\\
        \mu(m)(i) & 0 & \text{otherwise} \ .
    \end{array}
    \end{equation*}
    Executing a single step gives us $\mu'=\hwEval{\SimpISA}{\mu}$, with $\mu'(pc)=1$, and $\mu'(reg)=0$. 
    Since only the program counter changed, we get $\mu \equivVars{\theSet{reg,mem}} \mu'$, but not $\mu \equivVars{\theSet{pc}} \mu'$. 
    The trace $\hwEvalInfty{\SimpISA}{\mu}$ consists of the following sequence of register values (since the register value does not change after step 11), where $\cdot$ denotes concatenation:
    \begin{align*}
    \hwEvalInfty{\SimpISA}{\mu} = 0 \cdot 0 \cdot 1 \cdot 3 \cdot 6 \cdot 10 \cdot 15 \cdot 21 \cdot 28 \cdot 36 \cdot 45 \cdot 55 \cdot 55 \cdot 55 \ldots
    \end{align*}
    As an example of filtering, consider the predicate $\phi:=pc \; \mathbf{mod} \; 2 = 0$ indicating whether the program counter is even.
    The filtered semantics associated with $\phi$ yields the following sequence:
    \begin{align*}
        \hwEvalInftyFilter{\SimpISA}{\mu}{\phi} = 0 \cdot 1 \cdot 6 \cdot 15 \cdot 28 \cdot 45 \cdot 55 \cdot \ldots 
    \end{align*}
    \end{example}

\subsection{Modeling architectures and microarchitectures}\label{sec:model:arch-and-uarch}

We now show how instruction set architectures (short: architectures) and microarchitectures can be modeled in $\lang{}$.
Then, we formalize what it means for a microarchitecture $\uarch$ to correctly implement an architecture $\arch$.

\mypara{Architectures}
We view architectures as state machines that define how the execution progresses through a sequence of architectural states, where each transition corresponds to the execution of a single instruction.
Given a set of \emph{architectural registers} $\archVars$, we model an \emph{architecture} as a circuit $\arch$ over $\archVars$, \ie{} $\vars{\arch} = \outputs{\arch} = \archVars$.
We assume that a subset $\initStates{\arch}$ of $\arch$'s states are identified as initial states.

\begin{example}
\label{ex:arch}    
Consider again circuit $\SimpISA$ from~\Cref{ex:syntax}. 
Its variables $\vars{\SimpISA} = \{pc, reg, m\}$ form the architectural state of the ISA.
We identify as initial states all valuations $\mu$ such that $\mu(pc)=0$ and $\mu(reg)=0$. 
In the circuit from \Cref{fig:reference} the architectural state is given by $\vars{R} = \theSet{imem, pc, register}$ whereas $instr, op$, and $imm$ are not listed as they are wires.
\end{example}

\mypara{Microarchitectures}
We model microarchitectures as circuits that capture the execution at the granularity of clock cycles.
Thus, a \emph{microarchitecture} is a circuit $\uarch$ that refers to both architectural registers in $\archVars$ and to additional microarchitectural registers $\uarchVars$ such that $\vars{\uarch} = \outputs{\uarch}=\archVars \cup \uarchVars$ and $\archVars \cap \uarchVars = \emptyset$.
We assume that a subset $\initStates{\uarch}$ of $\uarch$'s states is identified as the initial states and require that $\project{\val}{\archVars} \in \initStates{\arch}$ for any state $\mu \in \initStates{\uarch}$, \ie{} the architectural part of an initial microarchitectural state should be an initial architectural state.\looseness=-1 %

\begin{example} \label{ex:muarch}
    Let us look at a microarchitectural implementation $\SimpIMP$ of the ISA in
    example~\ref{ex:syntax}. The implementation, shown below, can be in one of two
    states (indicated by the register $\mathit{st}$): execute state
    ($\mathit{st} = 0$) or write-back state ($\mathit{st} = 1$). In the execute
    state, $\SimpIMP$ computes the result of adding the immediate to the current
    register value and assigns it to the variable $res$; it then moves to the
    write-back state (line 3). In the write-back state, $\SimpIMP$ writes the result to the
    register $reg$, moves the state to the execute stage, and increments the program
    counter (line 4). If the immediate value is zero, the implementation
    triggers a fast path which keeps the circuit in the execute state, increments
    the program counter, and leaves the register unchanged (line 2). Finally,
    the circuit updates the variable $ret$ which indicates whether the circuit
    retired in the current step. For readability, we write the example in an
    extended syntax that allows branches at the assignment level.\footnote{This
    syntax can be easily expanded into the one in \Cref{figure:language:syntax}
    by pushing branches into expressions. For example, we can rewrite
    $\mathbf{if} \; e \; \mathbf{th} \; \{ \; \passign{x}{a} \; \} \;
    \mathbf{el} \; \{ \; \passign{x}{b} \; \}$ as $\passign{x}{\mathbf{if} \; e
    \; \mathbf{th} \; a \; \mathbf{el} \; b \; }$.}
    \begin{equation*}
        \begin{array}[c]{@{}l@{\qquad}l@{}}
       1 & \mathbf{if} \; st=0 \; \mathbf{then} \\
       2 & \quad \mathbf{if} \; m[pc]=0 \; \mathbf{th} \; \{ \passign{st}{0}, \passign{pc}{pc+1}, \passign{ret}{1} \} \\
       3 & \quad \mathbf{el} \; \{ \passign{st}{1}, \passign{res}{m[pc] + reg}, \passign{ret}{0} \} \\
       4 & \mathbf{el} \; \{ \passign{st}{0}, \passign{reg}{res}, \passign{pc}{pc+1}, \passign{ret}{1} \} : \{reg\}  \\
        \end{array}
        \end{equation*}
    In addition to architectural variables $\{pc, reg, m\}$, the implementation
    contains microarchitectural variables $\{ st, res, ret\}$. We pick as our
    initial valuations all $\mu$ such that $\mu(pc)=0$, $\mu(reg)=0$,
    $\mu(st)=0$, and $\mu(ret)=1$. As required, the initial state for
    architectural variables $\{pc, reg, m\}$ agrees with the state from
    \Cref{ex:arch}.
\end{example}

\mypara{ISA compliance}
To correctly implement an architecture $\arch$, an implementation $\uarch$ needs to change the architectural state in a manner consistent with $\arch$.
We capture this with the help of a \emph{retirement predicate} $\phi$, a predicate indicating when $\uarch$ retires instructions.
Then, we say that a microarchitecture $\uarch$ implements an architecture $\arch$ (\Cref{def:isa-compliance}) if one can map changes of the architectural state in $\uarch$ to $\arch$'s executions using $\phi$. %

\begin{definition}\label{def:isa-compliance}
A microarchitecture $\uarch$ \emph{correctly implements} an architecture $\arch$ given a retirement predicate $\phi$  over $\vars{\uarch}$, written $\isasat{\uarch}{\arch}{\phi}$, if for all valuations $\mu \in \initStates{\uarch}$:
\begin{compactenum}
\item \emph{(Witnessed architectural changes agree with $\arch$)} $\hwEvalInftyFilter{\uarch}{\mu}{ \phi } \equivVars{\archVars} \hwEvalInfty{\arch}{ \mu }$, and \label{def:isa-cond1}
\item \emph{(No architectural changes beyond those witnessed)} $\hwEval{\uarch}{\mu, i}\hspace{-0.5mm} \equivVars{\archVars}\hspace{-0.5mm} \hwEval{\uarch}{\mu, i-1}$ whenever $\hwEval{\uarch}{\mu, i} \not\models \phi$.\looseness=-1 \label{def:isa-cond2}
\end{compactenum}
\end{definition}

\noindent
The predicate $\phi$ characterizes when instructions are retired, \ie{} when instructions modify the architectural state.
\Cref{def:isa-compliance} uses~$\phi$ to map architectural changes made by $\uarch$ to single steps in $\arch$'s executions.
This is sufficient for  single-issue processors, which retire at most one instruction per cycle.
Multiple-issue processors, which may retire multiple instructions in a single cycle, require more complex ways of mapping architectural changes made by $\uarch$ to $\arch$'s steps.
To simplify our model, we decided against more complex ISA compliance criteria since \tool{}'s verification approach (\Cref{sec:verification}) is decoupled from ISA compliance.
\begin{example}
\label{ex:mu-arch-comp}
Let's again consider implementation circuit $\SimpIMP$ from \Cref{ex:muarch}.
We choose as retirement predicate $\phi \eqdef ret=1$. 
Let's consider again valuation $\mu$ from  \Cref{ex:semantics}, which
maps $pc=0$, and $\mu(m)(i)=i$, for $0 \leq i \leq 10$.
Running $\SimpIMP$ on $\mu$ from produces the following sequence of register values,
where we underline a register value whenever $\phi$ holds on the corresponding
state. 
\begin{align*}
    \hwEvalInfty{\SimpIMP}{\mu} = \underline{0} \cdot \underline{0} \cdot 0 \cdot \underline{1} \cdot 1 \cdot \underline{3} \cdot 3 \cdot \underline{6} \cdot 6 \cdot \underline{10} \cdot 10 \cdot \underline{15} \cdot 15 \cdot \ldots 
\end{align*}
It's easy to check that $\hwEvalInftyFilter{\SimpIMP}{\mu}{\phi}$, \ie the sequence of
underlined values, matches $\hwEvalInfty{\SimpISA}{\mu}$, and that the register value
remains unchanged whenever $\phi$ doesn't hold. Since this is true, not only for
$\mu$ but for all valid initial states, we can conclude that $\SimpIMP$ correctly
implements $\SimpISA$, \ie  $\isasat{\SimpIMP}{\SimpISA}{\phi}$.

\end{example}

\subsection{Leakage contracts}\label{sec:model:leakage-contracts}

In this section, we first introduce monitoring circuits, which we use to specify leakage contracts and attackers.
Then, we formalize contract satisfaction~\cite{contracts2021} within our modeling framework.

\mypara{Monitoring circuits}
Monitoring circuits \emph{monitor} the behavior of another circuit, and we will use them to formalize leakage contracts and attackers.
We say that circuit $M$ is a \emph{monitoring circuit} for circuit~$C$ if 
\begin{inparaenum}
\item $\writeVars{C} \cap \writeVars{M} = \emptyset$, \ie{} the two circuits write to separate sets of registers, 
\item $\wireVars{C} \cap \wireVars{M} = \emptyset$, \ie the two circuits write to separate wire variables, and
\item $\vars{C} \cap \writeVars{M} = \emptyset$, \ie{} $M$ does not influence $C$'s behavior. %
\end{inparaenum}
Additionally, $M$ is \emph{combinatorial} whenever $\readVars{M} \subseteq \vars{C}$, \ie{} $M$ only reads from $C$ variables and thus does not have state of its own.
Finally, the \emph{composition} of the monitoring circuit $M$ and the monitored circuit $C$, written $\compose{M}{C}$, is  the circuit defined as $C.A \cup M.A : \wires{C} \cup \wires{M} : M.O$, which computes over $C$'s state without changing its behavior.

\mypara{Leakage contracts}
A \emph{leakage contract} is the composition of a leakage monitor $\contract$, \ie{} a combinatorial monitoring circuit $\contract$
for the architecture $\arch$, with the architecture $\arch$ itself.
That is, a leakage contract $\compose{\contract}{\arch}$ discloses parts of the architectural state during $\arch$'s execution at the granularity of instruction execution.

\mypara{Hardware attackers}
We formalize an \emph{attacker} as a combinatorial monitoring circuit $\atk$ for the microarchitecture $\uarch$.
That is, an attacker observes parts of the microarchitecture's state during the execution at the granularity of clock cycles.
\begin{example}
    Consider again circuit~$\SimpISA$ from~\Cref{ex:syntax}, the ISA specification of our running example. 
    We define the leakage monitor~$\SimpLM$, which leaks whether the current instruction is zero. 
    As $\SimpLM$ only reads $\SimpISA$'s variables, it is combinatorial.
    \begin{equation*}
        \SimpLM = \{\} : \{ v = (m[pc]=0) \}: \{ v \}
    \end{equation*} 
    Consider again the valuation $\mu$ from \Cref{ex:semantics}, which maps
    $\mu(m)(i)=i$, for $0 \leq i \leq 10$. Since for $i \leq 10$, only the first
    instruction is zero, executing $\compose{\SimpLM}{I}$ yields the following
    sequence.
    \begin{equation*}
    \hwEvalInfty{\compose{\SimpLM}{\SimpISA}}{\mu} = 1 \cdot 0 \cdot 0 \cdot 0 \cdot \dots \ .
    \end{equation*}
\end{example}
\begin{example}
    Next, consider the implementation circuit $\SimpIMP$ from \Cref{ex:muarch}. We
    define the following attacker monitor, which leaks the  program
    counter and thus the timing of the computation.
        \begin{equation*}
            \SimpATK = \{\} : \{\}: \{ pc \}
    \end{equation*} 
    Running $\compose{\SimpATK}{\SimpIMP}$ on $\mu$ yields the following sequence.
    \begin{equation*}
        \hwEvalInfty{\compose{\SimpATK}{\SimpIMP}}{\mu} = 0 \cdot 1 \cdot 1 \cdot 2 \cdot 2 \cdot 3 \cdot 3 \cdot \dots \ .
    \end{equation*}
\end{example}
\mypara{Contract satisfaction}
\Cref{def:contract-satisfaction} formalizes the notion of contract satisfaction~\cite{contracts2021}.
Intuitively, a microarchitecture $\uarch$ {satisfies} the contract $\compose{\contract}{\arch}$ for an attacker $\atk$ if $\atk$ cannot learn more information about the initial architectural state by monitoring $\uarch$'s executions than what is exposed by $\compose{\contract}{\arch}$.
That is, for any two initial states that agree on their microarchitectural part\footnote{Following~\cite{contracts2021}, we assume that secrets initially reside only in the architectural state and that attackers can observe the initial values of registers in $\uarchVars$, \ie{}  $\val \equivVars{\uarchVars} \val'$.}, whenever $\compose{\contract}{\arch}$ results in identical traces, then $\compose{\atk}{\uarch}$ also results in identical traces (\ie{} $\atk$ cannot distinguish the two initial architectural states).

\begin{definition}\label{def:contract-satisfaction}
Microarchitecture $\uarch$ \emph{satisfies} contract $\compose{\contract}{\arch}$ for attacker $\atk$, written $\ctrsat{ \compose{\contract}{\arch} }{ \compose{\atk}{\uarch} }$, if for all valuations $\val,\val' \in \initStates{\uarch}$ such that $\val \equivVars{\uarchVars} \val'$,
    if  $\hwEvalInfty{ \compose{\contract}{\arch} }{ \val  }  = \hwEvalInfty{ \compose{\contract}{\arch} }{  \val' }$, 
    then $\hwEvalInfty{ \compose{\atk}{\uarch} }{ \val } = \hwEvalInfty{ \compose{\atk}{\uarch} }{ \val' }$.
\end{definition}

We remark that \Cref{def:contract-satisfaction} 
 refers to 4 different traces: two contract traces from $\compose{\contract}{\arch}$ and two attacker traces from $\compose{\atk}{\uarch}$.

\begin{figure}
    \hspace{5mm}
    \begin{subfigure}[t]{0.4\linewidth}  
        \parbox[c]{.001\linewidth}{
        \subcaption{\label{fig:running:a}}
        }
        \hspace{-.1\linewidth}      
        \parbox[c]{.22\linewidth}{
        \begin{equation*}
            \begin{array}[c]{@{}|l@{\quad \;}l@{\quad \;}l@{\quad}|}
                 \cline{1-3} \rule{0pt}{3ex}
                1 & 0 & 2   \\[1ex]
                \cline{1-3} \rule{0pt}{3ex}
                5 & 1 & 3 \\[1ex]
           \cline{1-3}
            \end{array}
        \end{equation*}
        \hfill~
        }
    \end{subfigure}
    \begin{subfigure}[t]{0.4\linewidth}  
        \parbox[c]{.001\linewidth}{
        \subcaption{\label{fig:running:b}}
        }
        \hspace{-.1\linewidth}      
        \parbox[c]{.32\linewidth}{    
        \begin{equation*}
            \begin{array}[c]{@{}|l@{\quad \;}l@{\quad \;}l@{\quad}|}
                \cline{1-3} \rule{0pt}{3ex}
                1 & 0 & 2   \\[1ex]
                \cline{1-3} \rule{0pt}{3ex}
                5 & 0 & 3 \\[1ex]
                \cline{1-3}
            \end{array}
        \end{equation*}
        }
        \hfill~
    \end{subfigure}
    \vspace{-10pt}
\caption{Two pairs of instruction traces. \label{fig:run:traces}}
\end{figure}

\begin{example}
    \label{ex:run-contract-sat}
    Let's consider the two pairs of memories (a) and (b) shown in
    \Cref{fig:run:traces}. We will check contract satisfaction, \ie that
    $\ctrsat{ \compose{\SimpLM}{\SimpISA}}{ \compose{\SimpATK}{\SimpIMP}}$ on these particular traces.
    
    Let us start with the instructions from Figure \ref{fig:running:a}. Consider two states
    $\mu_a$ and $\mu_a'$, such that $\mu_a(m)$ contains the upper 
    instructions in \Cref{fig:run:traces}, and $\mu_a'(m)$ contains the lower
    ones. For $i \geq 3$, we let $\mu_a(m)=\mu_a'(m)=0$. 
    Running $\mu_a$ and $\mu_a'$ on the contract, we get:
    \begin{align*}
        \hwEvalInfty{\compose{\SimpLM}{\SimpISA}}{\mu_a} &= 0 \cdot \underline{1} \cdot 0 \cdot 1 \cdot 1 \cdot 1 \cdot \dots \\
        \hwEvalInfty{\compose{\SimpLM}{\SimpISA}}{\mu_a'} &= 0 \cdot \underline{0} \cdot 0 \cdot 1 \cdot 1 \cdot 1 \cdot \dots      
    \end{align*}
    As the contract traces differ in the second position, contract satisfaction
    holds trivially. Next, consider the traces in Figure \ref{fig:running:b}. As
    before, we construct valuations $\mu_b$ for the upper trace, and $\mu_b'$
    for the lower trace. We get the traces below.
    \begin{equation*}
        \hwEvalInfty{\compose{\SimpLM}{\SimpISA}}{\mu_b} = \hwEvalInfty{\compose{\SimpLM}{\SimpISA}}{\mu_b'} = 0 \cdot 1 \cdot 0 \cdot 1 \cdot 1 \cdot 1 \cdot \dots \ .    
    \end{equation*}
    As both valuations produce the same trace, we need to check the attacker
    observations on the implementation. We get 
    \begin{equation*}
        \hwEvalInfty{\compose{\SimpATK}{\SimpIMP}}{\mu_b} = \hwEvalInfty{\compose{\SimpATK}{\SimpIMP}}{\mu_b'} = 0 \cdot 0 \cdot 1 \cdot 2 \cdot 2 \cdot 3 \cdot 4 \cdot 5 \cdot \dots \ .    
    \end{equation*}
    We can therefore conclude that contract satisfaction holds for these traces. To verify contract satisfaction, we need to not only check this property for $\mu_b$ and $\mu_b'$, but for any pair of traces.
    We will discuss our approach for this in the next section.
\end{example}

\section{Verifying contract satisfaction}\label{sec:verification}

Here, we present our verification approach for checking contract satisfaction.
First, we introduce a decoupling theorem that allows us to separate security and functional correctness proofs (\Cref{sec:verification:decoupling}).
Next, we present (and prove sound) an algorithm for verifying microarchitectural contract satisfaction (\Cref{sec:verification:algorithm}).
All proofs are in \techReportAppendix{app:proofs}.\looseness=-1

\subsection{Decoupling contract satisfaction from ISA}\label{sec:verification:decoupling}

Since a leakage contract $ \compose{\contract}{\arch}$ is defined on top of $\arch$, proving contract satisfaction according to \Cref{def:contract-satisfaction} requires 
reasoning about security \emph{and} functional compliance with respect to $\arch$ (since one needs to map contract traces from  $ \compose{\contract}{\arch}$ to implementation traces). %
We address this challenge by decoupling reasoning about security and about $\arch$ compliance.

\mypara{Leakage ordering}
For this, we start by introducing a leakage ordering between combinatorial monitoring circuits for an underlying circuit $C$.
Intuitively, a monitor $M$ for $C$ ``leaks less'' (\ie{} exposes less information) than another monitor $M'$ for $C$ if whenever $\compose{M'}{C}$ produces equivalent traces on two initial states, then $\compose{M}{C}$ also produces equivalent traces.
\Cref{def:leakage-ordering} formalizes this concept and extends it to support the filtered semantics. 

\begin{definition}\label{def:leakage-ordering}
    Monitor \emph{$M'$ leaks at most as much information as} monitor $M$ about circuit $C$, given registers $V \subseteq \vars{C}$, and predicate $\phi$ (over $C$), written $\leakorder{M}{M'}{C}{V,\phi}$, if 
    for all valuations $\val,\val' \in \initStates{C}$ such that $\val \equivVars{V} \val'$, if $\hwEvalInftyFilter{\compose{M}{C}}{\mu}{\phi} = \hwEvalInftyFilter{\compose{M}{C}}{\mu'}{\phi}$, then $\hwEvalInfty{\compose{M'}{C}}{ \mu } = \hwEvalInfty{\compose{M'}{C}}{ \mu'}$.
\end{definition}
Differently from \Cref{def:contract-satisfaction} (which is defined in terms of four traces), \Cref{def:leakage-ordering} 
 is defined in terms of only two traces of $C$.
\begin{example}
    We can use our new definition to express contract satisfaction over the
    implementation only, using predicate~$\phi$. Consider again the two pairs of
    traces in \Cref{fig:run:traces} from \Cref{ex:run-contract-sat}. If we
    assume that the implementation is functionally correct, that is, it
    satisfies \Cref{def:isa-compliance}, we can replace the specification $\SimpISA$
    by its implementation $\SimpIMP$. In particular, since \Cref{def:isa-compliance}
    ensures that $\SimpISA$'s architectural values match $\SimpIMP$'s whenever
    retirement predicate $\phi = (ret=1)$ holds, we can check contract satisfaction by
    checking $\leakorder{ \SimpLM }{\SimpATK}{\SimpIMP}{\{ \mathit{st},
    \mathit{res}, \mathit{ret} \}, \phi}$. We call this condition
    \emph{microarchitectural contract satisfaction}.
    Let us now check this property for the traces in \Cref{fig:running:b}.
    Running $\compose{\SimpLM}{\SimpIMP}$, we get the following, where we
    underline outputs whenever $\phi$ holds.
    \begin{equation*}
        \hwEvalInfty{\compose{\SimpLM}{\SimpIMP}}{\mu_b} = \hwEvalInfty{\compose{\SimpLM}{\SimpIMP}}{\mu_b'} = \underline{0} \cdot 0 \cdot \underline{1} \cdot \underline{0} \cdot 0 \cdot \underline{1} \cdot \underline{1} \cdot \dots \ .    
    \end{equation*}
    This means the premise of the implication is satisfied, and we need to check the conclusion. As before, we get
    \begin{equation*}
        \hwEvalInfty{\compose{\SimpATK}{\SimpIMP}}{\mu_b} = \hwEvalInfty{\compose{\SimpATK}{\SimpIMP}}{\mu_b'} = 0 \cdot 0 \cdot 1 \cdot 2 \cdot 2 \cdot 3 \cdot 4 \cdot 5 \cdot \dots \ .    
    \end{equation*}
    which establishes $\leakorder{ \SimpLM }{ \SimpATK }{\SimpIMP}{\{ \mathit{st}, \mathit{res}, \mathit{ret}\},
    \phi}$ for $\mu_b$ and $\mu_b'$. We formalize this idea in \Cref{thm:contract-sat-preconditions}.
\end{example}

\mypara{Decoupling theorem}
\Cref{thm:contract-sat-preconditions} states that, for functionally correct processors, \emph{microarchitectural contract satisfaction} (\ie{} $\leakorder{ \contract }{ \atk }{\uarch}{\uarchVars, \phi}$, which only refers to the microarchitecture $\uarch$), is equivalent to contract satisfaction (\Cref{def:contract-satisfaction} which refers to  architecture $\arch$ and microarchitecture $\uarch$).
This allows us to cleanly separate reasoning about security and about functional correctness (without losing precision).
In particular, we can split proving contract satisfaction into proving microarchitectural contract satisfaction (which ensures the absence of leaks with respect to $\uarch$) and ISA compliance.
\tool{} leverages \Cref{thm:contract-sat-preconditions} to only reason about security, whereas ISA compliance can be verified separately using techniques focusing on functional correctness~\cite{reid2016end}.

\begin{theorem}[Decoupling Theorem]\label{thm:contract-sat-preconditions}
If $\isasat{\uarch}{\arch}{\phi}$ holds for retirement predicate $\phi$, then
\[ \leakorder{ \contract }{ \atk }{\uarch}{\uarchVars, \phi} \Leftrightarrow \ctrsat{ \compose{\contract}{\arch}}{ \compose{\atk}{\uarch}}. \]
\end{theorem}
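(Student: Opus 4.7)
The plan is to reduce the biconditional to a single \emph{bridge lemma} identifying the filtered contract trace of the microarchitecture with the unfiltered contract trace of the architecture:
\[
\text{If } \isasat{\uarch}{\arch}{\phi}, \text{ then for every } \mu \in \initStates{\uarch},\ \hwEvalInftyFilter{\compose{\contract}{\uarch}}{\mu}{\phi} = \hwEvalInfty{\compose{\contract}{\arch}}{\mu}.
\]
Once this lemma is established, both directions of the biconditional are one-line unfoldings of \Cref{def:contract-satisfaction} and \Cref{def:leakage-ordering}.

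To prove the bridge lemma I would exploit three facts. First, because $\contract$ is a combinatorial monitor for $\arch$, its outputs at cycle $i$ are a pure function of the current $\archVars$-valuation; it has no state of its own and therefore commutes with taking subsequences. Second, since $\writeVars{\contract} \cap \vars{\arch} = \emptyset$ and $\writeVars{\contract} \cap \vars{\uarch} = \emptyset$, composition by $\contract$ does not alter the trajectories of $\arch$ or $\uarch$, so the output traces of $\compose{\contract}{\arch}$ and $\compose{\contract}{\uarch}$ are obtained by applying $\contract$ pointwise to the architectural-state sequences of $\arch$ and $\uarch$ respectively. Third, ISA-compliance condition~(\ref{def:isa-cond1}) tells us exactly that filtering the $\uarch$-trace by $\phi$ yields the same sequence of $\archVars$-valuations as $\hwEvalInfty{\arch}{\mu}$. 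Applying $\contract$ pointwise to two equal sequences of architectural valuations yields equal contract-output traces, proving the bridge lemma.

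With the bridge lemma in hand, the forward direction assumes microarchitectural contract satisfaction and picks $\mu,\mu' \in \initStates{\uarch}$ with $\mu \equivVars{\uarchVars} \mu'$ such that $\hwEvalInfty{\compose{\contract}{\arch}}{\mu} = \hwEvalInfty{\compose{\contract}{\arch}}{\mu'}$; the bridge lemma then gives $\hwEvalInftyFilter{\compose{\contract}{\uarch}}{\mu}{\phi} = \hwEvalInftyFilter{\compose{\contract}{\uarch}}{\mu'}{\phi}$, so by $\leakorder{\contract}{\atk}{\uarch}{\uarchVars,\phi}$ the attacker traces coincide. The reverse direction is symmetric: equal filtered $\compose{\contract}{\uarch}$-traces are equal $\compose{\contract}{\arch}$-traces by the bridge lemma, and contract satisfaction then delivers equal $\compose{\atk}{\uarch}$-traces.

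The main obstacle I anticipate is not the overall structure but the careful bookkeeping needed to turn the informal statement ``applying $\contract$ pointwise commutes with $\phi$-filtering'' into a formal argument against the precise definitions of $\hwEvalInfty{\cdot}{\cdot}$ and $\hwEvalInftyFilter{\cdot}{\cdot}{\cdot}$ in \Cref{app:semantics}. The two delicate points are: (i) condition~(\ref{def:isa-cond2}) of \Cref{def:isa-compliance} is needed only implicitly, to ensure that no architectural-state change sneaks in between consecutive $\phi$-satisfying cycles that could desynchronize the two sides of the bridge lemma; and (ii) because $\phi$ ranges over $\vars{\uarch}$ and not $\vars{\arch}$, one has to argue that composing with the combinatorial monitor $\contract$ does not perturb the truth value of $\phi$ at any cycle, which follows from $\writeVars{\contract} \cap \vars{\uarch} = \emptyset$. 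Once these two observations are in place, the remainder of the proof is purely mechanical.
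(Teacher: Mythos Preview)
Your proposal is correct and follows essentially the same approach as the paper: the paper's proof also hinges, in each direction, on the step ``Since $\isasat{\uarch}{\arch}{\phi}$ and $\contract$ is a monitoring circuit for $\arch$, we get $\hwEvalInftyFilter{\compose{\contract}{\uarch}}{\mu}{\phi} = \hwEvalInfty{\compose{\contract}{\arch}}{\mu}$,'' which is precisely your bridge lemma stated inline rather than factored out. Your write-up is in fact more detailed than the paper's (which leaves the justification of that step implicit); one minor over-caution is that condition~(\ref{def:isa-cond2}) is not actually needed for the bridge lemma, since condition~(\ref{def:isa-cond1}) already asserts the equality of the filtered and reference $\archVars$-sequences directly.
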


\subsection{Verifying microarchitectural contract satisfaction}\label{sec:verification:algorithm}

In this section, we present an algorithm for checking microarchitectural contract satisfaction, \ie{} $\leakorder{ \contract }{ \atk }{\uarch}{\uarchVars, \phi}$.
We first introduce notation for formalizing our verification queries in terms of temporal logic formulas.
Next, we present the verification algorithm and conclude by proving its soundness.

\mypara{Notation}
To formalize our verification queries, we use a linear temporal logic over $\lang{}$ circuits.
Formulas $\Phi$ in this logic are constructed by combining $\lang$ predicates $\phi$ with temporal operators $\future$ (denoting ``in the next cycle''), $\boundedFuture{B}$ (denoting ``for the next $B$ cycles''), and $\alwaysFuture$ (denoting ``always in the future''), and the usual boolean operators.
Given a temporal formula $\Phi$ over a circuit $C$, we write $C, \mu, i \models \Phi$ to denote that the formula is satisfied for initial state $\mu$ at cycle $i$.
We write $C, \mu \models \Phi$ to mean $C, \mu, 0 \models \Phi$, and $C \models \Phi$ to mean that $C, \mu \models \Phi$ holds for all $\mu$.
Our temporal logic is standard; we provide its formalization  in 
\techReportAppendix{app:temporal-logic}.
\begin{example}
    Consider again circuit $\SimpISA$ from \Cref{ex:syntax}. Using initial valuation $\mu$, where $\mu(pc) = 0$, the following holds.
    \begin{equation*}
        \begin{array}[t]{@{}l@{\qquad}l@{}}
            \SimpISA, \mu \models pc = 0 & \SimpISA, \mu \models \future(pc = 1) \\    
            \SimpISA, \mu \models \boundedFuture{3} (pc \leq 3) & \SimpISA \models pc \geq 0 \to \alwaysFuture (pc \geq 0) 
        \end{array}
    \end{equation*}         
\end{example}

\mypara{Product circuit}
Verifying microarchitectural contract satisfaction requires us to reason about \emph{pairs of executions} of $\uarch$, \ie it is a 2-hyperproperty~\cite{clarkson2010hyperproperties}.
We transform hyperproperties into properties over a single execution using a construction called \emph{self-composition}~\cite{barthe2011secure}.
For this, we construct a \emph{product circuit} that executes two copies of a circuit $C$ ($\runA{C}$ and $\runB{C}$) in parallel.
Given circuit $C = \{ \passign{x_1}{e_1}, \ldots, \passign{x_n}{e_n} \} : \{ v_1 = e'_{1}, \ldots, v_k = e'_{k} \} : {o_1, \ldots, o_m}$, we define its \emph{product circuit} $\product{C}{C}$ as $\{ \passign{\runA{x_1}}{\runA{e_1}}, \ldots, \passign{\runA{x_n}}{\runA{e_n}}, \passign{\runB{x_1}}{\runB{e_1}}, \ldots, \passign{\runB{x_n}}{\runB{e_n}} \} : \{ \runA{v_1} = \runA{e'_1}, \runB{v_1} = \runB{e'_1}, \ldots  \runA{v_k} = \runA{e'_k}, \runB{v_k} = \runB{e'_k},  \} : \{\runA{o_1}, \ldots, \runA{o_m}, \runB{o_1}, \ldots, \runB{o_m} \}$ where 
$e^i$, for $i \in \{1,2\}$, is obtained by replacing all registers $x$ with $x^i$ and all variables $v$ with  $v^i$ in expression $e$.

\mypara{Stuttering product circuit}
While the product circuit allows us to reason about pairs of executions, we need another ingredient to check microarchitectural contract satisfaction, as it refers to the \emph{filtered semantics} over a predicate $\phi$.
We cannot directly check the filtered semantics on the product circuit, as $\phi$ may be satisfied at different times. Instead, we modify the product circuit to synchronize the two executions based on $\phi$.
Given a circuit $C = \{ \passign{x_1}{e_1}, \ldots, \passign{x_n}{e_n} \} : 
\{ v_1 = e'_{1}, \ldots, v_k = e'_{k} \}  : {o_1, \ldots, o_m}$, we define its \emph{stuttering product circuit} over predicate $\phi$, denoted by $\stutteringProduct{C}{\phi}$, by replacing each assignment $\passign{\runA{x}}{\runA{e}}$ in the product circuit $C \times C$ with $\passign{\runA{x}}{\ite{\runA{\phi} \wedge \neg \runB{\phi}}{\runA{x}}{\runA{e}}}$ and, similarly, by replacing each $\passign{\runB{x}}{\runB{e}}$ in the product circuit $\product{C}{C}$ with $\passign{\runB{x}}{\ite{\runB{\phi} \wedge \neg \runA{\phi}}{\runB{x}}{\runB{e}}}$.
This transformation ensures that whenever $\phi$ holds in one execution but not the other, the execution where $\phi$ holds ``waits'' for the other one to catch up.
\begin{example}
    Consider the circuit $N = \{ \passign{i}{i+1} \} : \{\}: \{ i \}$.
    Forming the product yields $\product{N}{N}=\{ \passign{\runA{i}}{\runA{i}+1}, \passign{\runB{i}}{\runB{i}+1} \} : \{\}: \{ \runA{i},\runB{i} \}$.
    Let us define filter predicate $\phi=(i \; \mathbf{mod} \; 2 = 0)$. We get $\runA{\phi}=(\runA{i} \; \mathbf{mod} \; 2 = 0)$, and $\runB{\phi}=(\runB{i} \; \mathbf{mod} \; 2 = 0)$, and
    \begin{equation*}
        \stutteringProduct{N}{\phi} = \left \{
        \begin{array}[c]{@{}l@{}}
            \passign{\runA{i}}{\ite{\runA{\phi} \wedge \neg \runB{\phi}}{\runA{i}}{\runA{i}+1}}, \\
            \passign{\runB{i}}{\ite{\runB{\phi} \wedge \neg
            \runA{\phi}}{\runB{i}}{\runB{i}+1}} 
        \end{array}
        \right \} : \{\}: \{ \runA{i}, \runB{i} \} \ .
    \end{equation*}    
    Let us fix $\mu_I(\runA{i})=0$ and $\mu_I(\runB{i})=1$. 
    We only want to compare states where both $\runA{\phi}$ and $\runB{\phi}$ hold, \ie we want to compare the filtered semantics $\hwEvalInftyFilter{N}{\runA{\mu}}{\phi}$ and $\hwEvalInftyFilter{N}{\runB{\mu}}{\phi}$, where $\runA{\mu}(i)=0$ and $\runB{\mu}(i)=1$.
    In $\product{N}{N}$ the two executions are not synchronized and $\product{N}{N}, \mu_I \models \alwaysFuture (\runA{\phi} \leftrightarrow \neg \runB{\phi})$.
    In contrast, $\stutteringProduct{N}{\phi}$ synchronizes the two executions.
    As initially $\runA{\phi}$ holds but $\runB{\phi}$ does not, only $\runB{i}$ gets incremented and, afterwards, the two copies run in lockstep.
    We can now check properties of the filtered semantics, \eg that $\stutteringProduct{N}{\phi}, \mu_I \models \alwaysFuture (\runA{\phi} \wedge \runB{\phi} \to \runB{i}=\runA{i}+2)$ holds.
\end{example}

\mypara{Algorithm idea}
We now use the stuttering product circuit to verify that $\leakorder{\contract}{ \atk }{\uarch}{\uarchVars, \phi}$ holds.
This requires us to show that all executions whose filtered semantics produce the same contract observations always produce the same attacker observations (see \Cref{def:leakage-ordering}).
We start by adding an assumption to only consider executions of $\stutteringProduct{\uarch}{\phi}$ that are contract equivalent. We encode this via the formula $\Phi_{\mathit{ctr-equiv}} := (\runA{\phi} \wedge \runB{\phi} \to  \psi_{\mathit{equiv}}^\mathit{\contract})$, where $\psi_{\mathit{equiv}}^\mathit{M} := \bigwedge_{o \in M.O} \runA{o} = \runB{o}$ for a monitor $M$.
We then only consider executions that satisfy $\alwaysFuture \Phi_{\mathit{ctr-equiv}}$.
Next, our algorithm learns an inductive invariant $\mathit{LI}$ over the stuttering product circuit under our assumption. This invariant holds on all reachable states of the circuit.
Finally, our algorithm uses the invariant to prove that indeed all executions of the circuit are attacker equivalent.
For this we show that $\mathit{LI} \to \psi_{\mathit{equiv}}^\mathit{\atk}$ holds. Note that we prove this property over $\stutteringProduct{\uarch}{\phi}$, however the consequent of $\mathit{LI} \to \psi_{\mathit{equiv}}^\mathit{\atk}$ is stated over the unfiltered semantics.
To ensure that the stuttering semantics is equivalent to the regular one, we also prove $\mathit{LI} \to (\runA{\phi} \leftrightarrow \runB{\phi})$, \ie no stuttering occurs on contract equivalent traces.

\mypara{Algorithm description}
We implement this approach in \Cref{alg:verification}. 
It relies on the procedure \textsc{LearnInv}, which we use to learn invariants over the stuttering product circuit.
We first present \textsc{Verify} and later discuss \textsc{LearnInv}.

Function \textsc{Verify} is the entry point of our verification approach.
It takes as input a $\lang{}$ microarchitecture $\uarch$ (the processor under verification), a leakage monitor $\contract$ (capturing the allowed leaks), an attacker monitor $\atk$ (capturing what the attacker can observe), and a retirement predicate $\phi$.
To verify unbounded properties like $\leakorder{ \contract }{ \atk }{\uarch}{\uarchVars, \phi}$, the algorithm relies on inductive reasoning.
For this reason, \textsc{Verify} additionally take as input (1) a set of candidate invariants $\CandInvs$ over the stuttering circuit (which will be verified using \textsc{LearnInv}) as well as (2) a lookahead $b \in \Nat^{+}$. 
Concretely, \tool{} constructs the set of candidate invariants $\CandInvs$ directly from  $\uarch$,  $\atk$, and  $\phi$; see \Cref{sec:implementation} for more details.

In \cref{line:alg-verification:initial-formula}, we construct $\Phi_{\mathit{initial}}$ (over the stuttering circuit $\stutteringProduct{\uarch}{\phi}$) capturing the initial conditions for pairs of executions relevant to our check.
In $\Phi_{\mathit{initial}}$, $\runA{\psi_{\mathit{init}}^{\uarch}}$ and $\runB{\psi_{\mathit{init}}^{\uarch}}$ capture that the two executions start from valid initial states,  whereas $\psi_{\mathit{equiv}}^{\uarchVars}$ ensures that the two executions initially agree on all registers in $\uarchVars$, \ie{} $\psi_{\mathit{equiv}}^{\uarchVars} := \bigwedge_{x \in \uarchVars} \runA{x} = \runB{x}$.
In \cref{line:alg-verification:precondition-formula}, we construct $\Phi_{\mathit{ctr-equiv}} :=  (\runA{\phi} \wedge \runB{\phi} \to  \psi_{\mathit{equiv}}^\mathit{\contract})$ ensuring that contract observations are equivalent.
In \cref{line:alg-verification:learninv-call}, we call the \textsc{LearnInv} procedure to verify which of the candidate invariants in $\CandInvs$ are, indeed, invariants.
Hence, the learned invariants $\mathit{LI}$ hold for any two contract-indistinguishable executions, \ie $C \models (\Phi_{\mathit{initial}} \wedge \alwaysFuture \Phi_{\mathit{ctr-equiv}}) \to \alwaysFuture \bigwedge LI$ holds where $\bigwedge LI$ stands for $\bigwedge_{\phi \in LI} \phi$.
Finally, in \cref{line:alg-verification:security-check} we check whether the learned invariants are sufficient to ensure that (1) the attacker observations are the same and (2)  the predicate $\phi$ is always synchronized between the two executions.
If this is the case, \textsc{Verify} has successfully verified that $\leakorder{ \contract }{ \atk }{\uarch}{\uarchVars, \phi}$ holds; see \Cref{thm:soundness}.

The \textsc{LearnInv} procedure learns, using inductive verification, which of the candidate invariants are true invariants using an approach similar to the Houdini tool~\cite{flanagan2001houdini}.
\textsc{LearnInv} takes as input a circuit $C$, a formula capturing initial conditions $\Phi_{\mathit{initial}}$, a formula $\Phi_{\mathit{assumption}}$ that executions always need to satisfy, a bound~$b$, and a set of candidate invariants $CI$.
The procedure outputs the formulas in $CI$ that can be proved to be invariants, \ie for which $C \models (\Phi_{\mathit{initial}} \wedge \alwaysFuture \Phi_{assumption}) \to \alwaysFuture \bigwedge LI$ holds. %
Concretely, \textsc{LearnInv} consists of a base case (lines \ref{line:alg-verification:start-base}--\ref{line:alg-verification:end-base}) and an induction step (lines \ref{line:alg-verification:start-induction}--\ref{line:alg-verification:end-induction}).
Both parts follow a similar structure---they iteratively rule out invalid invariants based on counterexamples---and they differ only in the checked property:
$\Psi_{\mathit{base}}$ checks that for any state for which the initial conditions hold and for which the assumptions are satisfied for the next $b$ cycles, the invariants must also hold. %
In contrast, $\Psi_{\mathit{induction}}$ checks that for any state for which the invariants hold and for which the assumptions are satisfied for the next $b$ cycles, the invariants hold in the next cycle as well. %
Bound $b$ controls for how many cycles to unroll the assumption $\alwaysFuture \Phi_{assumption}$.
Unrolling the assumption is important for circuits where a difference in attacker observation occurs before a corresponding difference in contract observations.
This may happen, \eg if a leak occurs early in the pipeline and is later justified by a difference in contract observations at retirement. 
It therefore often suffices to bound $b$ by the processor's pipeline depth.

\begin{algorithm}[t]
    \caption{\tool{} verification approach}\label{alg:verification}
\begin{algorithmic}[1]
	\Require \!Microarchitecture $\uarch$, leakage monitor $\contract$, attacker $\atk$, retirement predicate $\phi$, lookahead $b$, candidate invariants $CI$\looseness=-1
	\smallskip
    \Procedure{Verify}{$\uarch , \contract, \atk, \phi, b, CI$}\label{line:alg-verification:begin-verify}
        \State{$\Phi_{\mathit{initial}} := \runA{\psi_{\mathit{init}}^{\uarch}} \wedge \runB{\psi_{\mathit{init}}^{\uarch}} \wedge \psi_{\mathit{equiv}}^{\uarchVars}$}\label{line:alg-verification:initial-formula}
        \State{$\Phi_{\mathit{ctr-equiv}} :=  (\runA{\phi} \wedge \runB{\phi} \to  \psi_{\mathit{equiv}}^\mathit{\contract})$}\label{line:alg-verification:precondition-formula}
        \State{$LI := \textsc{LearnInv}(\stutteringProduct{\uarch}{\phi}, \Phi_{\mathit{initial}}, \Phi_{\mathit{ctr-equiv}}, b, CI)$}\label{line:alg-verification:learninv-call}
        \State{\Return{$\stutteringProduct{\uarch}{\phi} \models \bigwedge LI \to \psi_{\mathit{equiv}}^\mathit{\atk} \wedge (\runA{\phi} \leftrightarrow \runB{\phi})$}}\label{line:alg-verification:security-check} 
    \EndProcedure\label{line:alg-verification:end-verify}
    \smallskip
    \Procedure{LearnInv}{$C, \Phi_{\mathit{initial}}, \Phi_{\mathit{assumption}}, b, CI$}\label{line:alg-verification:begin-learninv}
        \While{$\top$} \Comment{base case}\label{line:alg-verification:start-base}
            \State{$\Psi_{\mathit{base}} := (\Phi_{\mathit{initial}} \wedge \boundedFuture{b} \Phi_{\mathit{assumption}}) \to  \bigwedge CI $} %
            \If{$C \models \Psi_{\mathit{base}} $}
                \State{\textbf{break}}
            \Else
                \State{Let $\val$ be the counterexample}
                \State{$CI:= \{ \phi \in CI \mid C, \val \models \phi \}$}
            \EndIf
        \EndWhile\label{line:alg-verification:end-base}

        \While{$\top$} \Comment{ind. step}\label{line:alg-verification:start-induction}
        \State{$\Psi_{\mathit{inductive}} := (\bigwedge CI \wedge \boundedFuture{b} \Phi_{\mathit{assumption}})  \to \future \bigwedge CI $}\label{line:alg-verification:ind-invariant}
        \If{$ C \models \Psi_{\mathit{inductive}}$}
            \State{\Return{$CI$}}
        \Else
            \State{Let $\val$ be the counterexample}
            \State{$CI:= \{ \phi \in CI \mid C, \val \models \phi \}$}
        \EndIf
    \EndWhile\label{line:alg-verification:end-invariants}\label{line:alg-verification:end-induction}
    \EndProcedure\label{line:alg-verification:end-learninv}

\end{algorithmic}
\end{algorithm}

\paragraph{Soundness}
\Cref{thm:soundness}  states that whenever \Cref{alg:verification}  returns $\top$, then microarchitectural contract satisfaction holds. %

\begin{theorem}\label{thm:soundness}
$\textsc{Verify}(\uarch,\contract,\atk,\phi, b , \RelInvs) \!\Rightarrow\! \leakorder{ \contract\!}{\!\!\!\atk }{\uarch}{\uarchVars, \phi}$.\looseness=-1 %
\end{theorem}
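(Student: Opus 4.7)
The overall plan is: assuming $\textsc{Verify}(\uarch,\contract,\atk,\phi,b,\RelInvs) = \top$, pick any pair $\mu_1, \mu_2 \in \initStates{\uarch}$ with $\mu_1 \equivVars{\uarchVars} \mu_2$ and $\hwEvalInftyFilter{\compose{\contract}{\uarch}}{\mu_1}{\phi} = \hwEvalInftyFilter{\compose{\contract}{\uarch}}{\mu_2}{\phi}$, lift them into a single execution of $\stutteringProduct{\uarch}{\phi}$ from the product valuation $\mu_I$ defined by $\mu_I(\runA{x}) = \mu_1(x)$ and $\mu_I(\runB{x}) = \mu_2(x)$ for every $x \in \vars{\uarch}$, use the invariants returned by \textsc{LearnInv} together with the final check of \textsc{Verify} to deduce both attacker-equivalence and $\phi$-synchronization along this execution, and finally lift back to conclude $\hwEvalInfty{\compose{\atk}{\uarch}}{\mu_1} = \hwEvalInfty{\compose{\atk}{\uarch}}{\mu_2}$.

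The first step is a soundness lemma for \textsc{LearnInv}: if it returns $LI$ on inputs $(C,\Phi_{\mathit{initial}},\Phi_{\mathit{assumption}},b,CI)$, then on every run of $C$ from a state satisfying $\Phi_{\mathit{initial}}$ along which $\alwaysFuture \Phi_{\mathit{assumption}}$ holds, $\bigwedge LI$ holds at every step. I would prove this by a step-indexed induction: the base case is supplied by the exit condition of the first while-loop, and the inductive step by the exit condition of the second, using only that $\alwaysFuture \Phi_{\mathit{assumption}}$ trivially implies $\boundedFuture{b} \Phi_{\mathit{assumption}}$ at every step. Instantiated with $C = \stutteringProduct{\uarch}{\phi}$ and $\Phi_{\mathit{assumption}} = \Phi_{\mathit{ctr\text{-}equiv}}$, this yields $\bigwedge LI$ at every step of the stuttering-product run from $\mu_I$, provided its two premises are met.

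The premise $\mu_I \models \Phi_{\mathit{initial}}$ is immediate from the choice of $\mu_I$ and $\mu_1 \equivVars{\uarchVars} \mu_2$. The delicate premise is $\alwaysFuture \Phi_{\mathit{ctr\text{-}equiv}}$ on $\stutteringProduct{\uarch}{\phi}$ from $\mu_I$. For this I would prove a correspondence lemma: the stuttering discipline aligns the joint $\phi$-points of $\runA{\cdot}$ and $\runB{\cdot}$ with the filtered traces of $\uarch$, in the sense that whenever $\runA{\phi} \wedge \runB{\phi}$ holds at the $j$-th joint $\phi$-step of the stuttering product, the $\runA{\cdot}$-projection of that state equals the $j$-th valuation of $\hwEvalInftyFilter{\uarch}{\mu_1}{\phi}$, and symmetrically for $\runB{\cdot}$. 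Because $\contract$ is combinatorial and reads only variables of $\uarch$, the assumed equality of filtered contract traces then yields $\psi_{\mathit{equiv}}^{\contract}$ at each such joint step, which is precisely what $\Phi_{\mathit{ctr\text{-}equiv}}$ demands.

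Combining $\bigwedge LI$ with the final check of \textsc{Verify} gives $\psi_{\mathit{equiv}}^{\atk} \wedge (\runA{\phi} \leftrightarrow \runB{\phi})$ at every step of the stuttering run from $\mu_I$. The second conjunct is the key to lifting back: since $\runA{\phi}$ and $\runB{\phi}$ always coincide, neither stuttering guard $\runA{\phi} \wedge \neg \runB{\phi}$ nor $\runB{\phi} \wedge \neg \runA{\phi}$ is ever enabled, so the stuttering-product execution coincides step-for-step with the plain product execution, and its $\runA{\cdot}$ and $\runB{\cdot}$ projections are exactly the $\uarch$-runs from $\mu_1$ and $\mu_2$. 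The invariant $\psi_{\mathit{equiv}}^{\atk}$ then delivers $\hwEvalInfty{\compose{\atk}{\uarch}}{\mu_1} = \hwEvalInfty{\compose{\atk}{\uarch}}{\mu_2}$. The main obstacle will be the correspondence lemma, specifically formalizing how the stuttering construction enumerates the two filtered sequences in lockstep at joint $\phi$-steps; once this is in place, the lookahead $b$ plays no further role, since $\alwaysFuture$ subsumes $\boundedFuture{b}$ at every step.
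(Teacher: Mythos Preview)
Your proposal is correct and follows essentially the same approach as the paper's proof: a soundness lemma for \textsc{LearnInv} established by step-indexed induction, followed by a transfer argument that uses the learned invariants and the final check to obtain $\psi_{\mathit{equiv}}^{\atk}$ and $\runA{\phi}\leftrightarrow\runB{\phi}$, with the latter collapsing the stuttering product to the plain product. The paper structures the second part as a proof by contradiction and simply asserts the step you call the correspondence lemma (your ``delicate premise''), whereas you make that step explicit and argue directly; this is a presentational difference, not a substantive one.
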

\begin{example}
    Consider again the implementation $\SimpIMP$ from \Cref{ex:muarch}. We want to
    verify that $\leakorder{ \SimpLM }{ \SimpATK }{\SimpIMP}{\{ \mathit{st},
    \mathit{res}, \mathit{ret} \}, \phi}$ holds. We start by building the
    stuttering product circuit $\stutteringProduct{\SimpIMP}{\phi}$ with respect to
    retirement predicate $\phi=(ret=1)$.
    We can assume that the two executions produce the same contract observations,
    whenever both executions retire. We capture this assumption in formula
    $\Phi_{\mathit{ctr-equiv}} := (\runA{ret}=1 \wedge \runB{ret}=1 \to
    \runA{m}[\runA{pc}] = \runB{m}[\runB{pc}])$, which we assume to hold
    throughout the execution. Next, we want to learn an inductive invariant over
    $\stutteringProduct{\SimpIMP}{\phi}$ under assumption $\alwaysFuture \ \Phi_{\mathit{ctr-equiv}}$.
    We pick the following set of candidate invariants.
    \begin{equation*}
        \CandInvs = \; \left \{ 
        \begin{array}[c]{@{}l@{}}
            \runA{pc} = \runB{pc}, \runA{st} = \runB{st}, \runA{res} = \runB{res}, \runA{ret} = \runB{ret} \\[\jot] 
            \runA{st}=0 \to \runA{ret}=1, \runA{st}=1 \to \runA{ret}=1 \\[\jot]  
        \end{array}
        \right \}       
    \end{equation*}
    Procedure \textsc{LearnInv} starts by checking the invariant candidates on
    the initial state. We set bound $b$ to 1. Since in all valid initial states
    $\mu$, we have $\mu(pc)=1$, $\mu(st)=0$, and all microarchitectural
    variables are assumed to be equal via $\Phi_{\mathit{initial}}$ we retain
    all candidate invariants.
    Next, \textsc{LearnInv} checks whether the candidate invariants are preserved under
    transitions. 
    That is, if we assume the invariant holds and take a transition step, the invariant must still hold.
    Since our invariant does not require memory $m$ to be equal in both executions,
    taking the else branch in line 3 of $\SimpIMP$ (see \Cref{ex:muarch}) produces a
    counterexample where $\runA{res} \neq \runB{res}$ and we remove the
    corresponding invariant.
    Similarly, taking the else branch in line 3 produces a state where
    $\runA{st}=1$ and $\runA{ret}=0$ and \textsc{LearnInv} removes the invariant as well. 
    The remaining candidate invariants are preserved under transitions and the procedure returns. 
    This leaves us with the following set of learned invariants.
    \begin{equation*}
        LI = \; \left \{ 
        \begin{array}[c]{@{}l@{}}
            \runA{pc} = \runB{pc}, \runA{st} = \runB{st}, \runA{ret} = \runB{ret} \\[\jot] 
            \runA{st}=0 \to \runA{ret}=1 \\[\jot]  
        \end{array}
        \right \}       
    \end{equation*}
    Finally, procedure \textsc{Verify} checks whether the conjunction of the
    learned invariants implies that attacker observations and retirement are the
    same in both executions.
    For our example, this means checking that the following implication holds.
    \begin{equation*}
        \left ( 
        \begin{array}[c]{@{}c@{}}
            \runA{pc} = \runB{pc} \wedge \runA{st} = \runB{st} \wedge \\[\jot] 
            \runA{ret} = \runB{ret} \wedge \runA{st}=0 \to \runA{ret}=1  
        \end{array}
        \right ) \to \left (
        \begin{array}[c]{@{}l@{}}
            \runA{pc} = \runB{pc} \wedge \\[\jot] (\runA{ret} = 1) \leftrightarrow (\runB{ret} = 1)    
        \end{array}
        \right )
    \end{equation*}
    As the implication is valid, we have proved microarchitectural contract satisfaction.
   \end{example}

\section{Implementation}\label{sec:implementation}

In this section, we present the \tool{} verification tool, which implements the verification approach from~\Cref{sec:verification:algorithm} for Verilog.
\tool{} uses the Yosys Open Synthesis Suite~\cite{yosys} for processing Verilog circuits, the Icarus Verilog simulator~\cite{iverilog} for simulating counterexamples, and the Yices SMT solver~\cite{yices} for verification.
\tool{} is open source and available at~\cite{leave-artifact} together with the benchmarks and scripts for reproducing the experiments from~\Cref{sec:evaluation}.

\paragraph{Inputs} \tool{} takes as input
\begin{inparaenum}
    \item the processor under verification (PUV) $\uarch$ implemented in Verilog,
    \item a leakage monitor formalized as Verilog expressions over $\uarch$'s architectural 
    state, 
    \item an attacker expressed as Verilog expressions over $\uarch$, 
    \item a retirement predicate $\phi$ expressed as a Boolean condition over $\uarch$, and
    \item a lookahead $b \in \Nat^+$.\footnote{As a rule of thumb, a sufficient choice for $b$ is the maximum number of cycles needed for an instruction to traverse the pipeline (from fetch to retire).}
\end{inparaenum}
Users can provide candidate relational invariants as expressions $e$ over $\uarch$ and \tool{} will construct the candidate invariant $\runA{e} = \runB{e}$.
Users can also provide additional invariants over individual executions of $\uarch$ to help ruling out spurious counterexamples.\footnote{\tool{} only verifies the relational invariants, which concern security. Invariants over $\uarch$, which concern functional correctness, are assumed and not checked by the tool.\looseness=-1}

\paragraph{Workflow}
\tool{} works in two steps that follows  \Cref{alg:verification}.

First, \tool{} determines the greatest subset of the provided candidate relational invariants that is inductive.
For this, \tool{} implements the \textsc{LearnInv} function from \Cref{alg:verification} (described below).
In addition to the user provided candidate invariants,  the set of candidate invariants for \textsc{LearnInv} contains:
\begin{inparaenum}
    \item \emph{all} relational formulas of the form $\runA{x} = \runB{x}$ where $x$ is a register or wire in $\uarch$,
    \item formulas of the form $\runA{e_\atk} = \runB{e_\atk}$ for all expressions $e_\atk$ in the provided attacker, and
    \item the invariant $\runA{\phi} \leftrightarrow \runB{\phi}$ indicating that the retirement predicate is always synchronized between the two executions.
\end{inparaenum} 

Next, \tool{} analyzes the learned invariants to determine if they are sufficient to prove security with respect to the given attacker.
For this, \tool{} checks if the invariants associated with the attacker and with the retirement predicate are part of the set of learned invariants, which is sufficient to ensure the satisfaction of the check at line 5 in \Cref{alg:verification}. %

\paragraph{Implementation of \textsc{LearnInv}}
\tool{}'s implementation of \textsc{LearnInv} follows \Cref{alg:verification}:
\begin{inparaenum}
    \item It constructs the stuttering product circuit by combining two copies of the PUV and using the provided retired predicate $\phi$ to synchronize the two executions (as described in~\Cref{sec:verification:algorithm}).
    \item Then, it inlines the property to be verified (\ie{} $\Psi_{\mathit{base}}$ and $\Psi_{\mathit{inductive}}$ from \Cref{alg:verification}) as \texttt{assume} and \texttt{assert} Verilog statements in the product circuit.
    \item Next, it checks whether the property holds.
    \item Whenever a property is not satisfied, \tool{} analyzes the counterexample to determine which candidate relational invariants are violated (lines 12-13 and 19-20 in \Cref{alg:verification}) 
\end{inparaenum}

For (1) and (2), we implemented dedicated Yosys passes that construct the stuttering product circuit and inline candidate relational invariants.
For (3), \tool{} uses Yosys to encode the product circuit and the verification queries into SMT logical formulas and the Yosys-BMC~\cite{yosys} backend to verify the property with the Yices SMT solver (using the lookahead $b$ as verification bound). %
For (4), when verification fails, Yosys-BMC translates the SMT counterexample into a Verilog testbench.
\tool{} instruments the testbench to monitor the value of all candidate invariants, simulates the testbench using Icarus Verilog, and discards the violated invariants.

\newcommand{\PipelineMul}{\textbf{RE}}
\newcommand{\DarkRiscv}{\textbf{DarkRISCV}}
\newcommand{\DarkRiscvTwoStages}{\DarkRiscv{}\textbf{-2}}
\newcommand{\DarkRiscvThreeStages}{\DarkRiscv{}\textbf{-3}}
\newcommand{\Sodor}{\textbf{Sodor}}
\newcommand{\SodorOneStage}{\Sodor\textbf{-1}}
\newcommand{\SodorTwoStages}{\Sodor\textbf{-2}}
\newcommand{\Ibex}{\textbf{Ibex}}
\newcommand{\IbexSmall}{\Ibex{}\textbf{-small}}
\newcommand{\IbexMul}{\Ibex{}\textbf{-mult-div}}
\newcommand{\IbexCache}{\Ibex{}\textbf{-cache}}
\newcommand{\PcContract}{\textbf{I}}
\newcommand{\BranchContract}{\textbf{B}}
\newcommand{\AlignedContract}{\textbf{A}}
\newcommand{\MemContract}{\textbf{M}}
\newcommand{\MulContract}{$\textbf{O}_{\textbf{m}}$}
\newcommand{\DivContract}{$\textbf{O}_{\textbf{d}}$}
\newcommand{\RetireAtk}{\textbf{R}}
\newcommand{\InstMemoryAtk}{\textbf{I}}
\newcommand{\DataMemoryAtk}{\textbf{D}}
\newcommand{\FourWayTool}{\textsc{4way-}\tool{}}

\section{Evaluation}\label{sec:evaluation}

This section reports on our use of \tool{} to verify the security of three open-source RISC-V processors.
We start by introducing our methodology (\Cref{sec:evaluation:setup}): the processors we analyze, the leakage contracts and attacker we consider, and the experimental setup.
In our experimental evaluation (\Cref{sec:evaluation:verification-results}), we address the following three research questions:
\begin{inparaenum}
    \item[\textbf{Q1}:] Can \tool{} be used to reason about the security of open-source RISC-V processors?
    \item[\textbf{Q2}:] What is the impact of varying the lookahead $b$ on verification time?
    \item[\textbf{Q3}:] What is the impact of decoupling security and functional correctness on verification? %
\end{inparaenum}\looseness=-1

\subsection{Methodology}\label{sec:evaluation:benchmarks}

\paragraph{Benchmarks}
We consider the following benchmarks.
\begin{asparaitem}
\item \PipelineMul: The simple processor from \Cref{sec:overview}. 
The \var{log_time_mul} module is implemented using shift operations (logarithmic in the number of set bits of the multiplier), inspired by one of Ibex's multipliers~\cite{ibex-mult}. %

\item \DarkRiscv: A RISC-V processor implementing most of the RISC-V RV32E and RV32I instruction set~\cite{darkriscv}.
The processor is in-order and single-issue, and we analyzed its 2-stage (\DarkRiscvTwoStages) and 3-stage (\DarkRiscvThreeStages) versions. %

\item \Sodor{}: An educational RISC-V processor~\cite{sodor}.
We analyzed the 2-stage version of Sodor implementing the RV32I instruction set.\looseness=-1

\item \Ibex{}: An open-source, production-quality 32-bit RISC-V CPU core~\cite{ibex}.\footnote{Ibex is written in SystemVerilog. To analyze it with \tool{}, we first translate it into plain Verilog using scripts from Ibex's developers.} 
We target Ibex in its default configuration (called ``small''~\cite{ibex}), 
which underwent functional correctness verification.
The processor has two stages and supports the RV32IMC instruction set.
In our experiments, we consider three variants of Ibex:
(1) \IbexSmall{} is the default ``small'' configuration with constant-time multiplication (three cycles) and without caches,
(2) \IbexCache{} is the \IbexSmall{} version extended with a simple (single-line) cache, and
(3) \IbexMul{} employs a non-constant-time multiplication unit whose execution time depends on the operands~\cite{ibex-mult}.

\end{asparaitem}

\smallskip
For the RISC-V processors in our experiments (\ie all variants of \DarkRiscv, \Sodor, \Ibex), we make the following assumptions during verification: 
\begin{inparaenum}[(1)]
\item debug mode is disabled,
\item all fetched instructions are legal and not compressed,
\item no exceptions or interrupts are raised during execution, and
\item only unprivileged instructions are executed.
\end{inparaenum}
Additionally, for 
 \IbexCache, we assume that memory operations are aligned at word boundaries due to limitations of our simple cache implementation. 
Finally, for all processors we manually specify a retirement predicate indicating when instructions retire.\looseness=-1

\paragraph{Leakage contracts}
We consider leakage contracts constructed by composing the following building blocks:
\begin{asparaitem}
\item \PcContract{}: This contract exposes the architectural program counter and the corresponding instruction retrieved from  memory.
\item \BranchContract{}: This contract exposes the architectural outcome of (direct and indirect) branch instructions. 
That is, for conditional branches, the contract exposes the architectural value of the condition.
\item \MemContract{}: This contract exposes the addresses accessed by load and store memory instructions.
\item \AlignedContract{}: This contract exposes whether load and store memory instructions are aligned.
\item \MulContract{}: This contract exposes the operands of \var{mul} and \var{imul} multiplication instructions. %
\item \DivContract{}: This contract exposes whether the divisor in \var{div} (division) and \var{rem} (remainder) instructions is $0$. %
\end{asparaitem}
In the following, we write \textbf{A}+\textbf{B} to denote the composition of contracts \textbf{A} and \textbf{B}.
For instance, \PcContract{}+\BranchContract{}+\MemContract{} is the contract that exposes everything exposed by \PcContract, \BranchContract, and \MemContract{}. This contract corresponds to the standard constant-time model~\cite{almeida2016verifying}.
We order contracts by the amount of information they leak, where stronger contracts leak less. For example, \PcContract{} is stronger than \PcContract{}+\BranchContract{} as it exposes less information.
For each processor from \Cref{sec:evaluation:benchmarks}, we implemented all the above mentioned contracts and their combinations as leakage monitors over the processor's architectural state.

\paragraph{Attacker}
For all processors from \Cref{sec:evaluation:benchmarks}, we implemented an attacker monitor that observes when instructions retire by exposing the value of the retirement predicate at each cycle.

\paragraph{Additional candidate invariants}
For $\DarkRiscv{}$, $\Sodor$, and $\Ibex$, we manually specified candidate relational invariants capturing that ``if instructions enter a pipeline stage in both executions, then the instructions are the same in both executions''. %
Moreover, for $\IbexCache{}$, we also added a candidate invariant capturing that ``if both executions are executing a load instruction, then the signals detecting a cache hit are the same.''
All these invariants can be formalized as formulas of the form $\runA{e} = \runB{e} \rightarrow \runA{e'} = \runB{e'}$.
These are not part of the invariants automatically generated by \tool{}, which are of the simpler form $\runA{e} = \runB{e}$.

\paragraph{Experimental setup}\label{sec:evaluation:setup}
All our experiments are run on a Ubuntu 20.04 virtual machine with 8 CPU cores and 32 GB of RAM running on Linux KVM on a server with 4 Xeon Gold 6154 CPUs and 512~GB of DDR4 RAM.
We configured \tool{} to run with Yosys version $0.24+10$, Icarus Verilog version $12.0$, and Yices version $2.6.4$.

\subsection{Experimental results \label{sec:evaluation:verification-results}}
\paragraph{Q1: Reasoning about open-source processors}
To evaluate whether \tool{} can verify the security guarantees of open-source processors, we use it to prove microarchitectural contract satisfaction  against an attacker $\atk$ that observes when instructions are retired. %
For each processor and leakage monitor from~\Cref{sec:evaluation:benchmarks}, we use \tool{} to check whether the attacker monitor leaks less than the leakage monitor with respect to the processor and its retirement predicate (indicating whenever instructions retire).

\Cref{fig:verification:results} reports (1) the strongest contract that could be verified against $\atk$, (2) the time needed for the verification of the satisfaction of the strongest contract%
, (3) the total number of iterations taken by \textsc{LearnInv} for the base and induction steps (i.e., the number of issued SMT queries), and (4) the minimum lookahead $b$ for which verification succeeded.
We highlight the following findings:
\begin{asparaitem}
\item For the \PipelineMul{} processor from \Cref{sec:overview}, \tool{} successfully verified contract satisfaction against the contract \MulContract{} exposing the multiplication's operand in 1.5 minutes with a lookahead of $33$. 
Such a lookahead is needed to ensure that in-flight multiplications are retired and the corresponding contract observation is produced. %

\item For \DarkRiscv, \tool{} proves contract satisfaction against the \PcContract{} contract, which exposes the current program counter and the loaded instruction, in 7 minutes for the two-stage version \DarkRiscvTwoStages{} and in around 11 minutes for the (more complex) three-stage version \DarkRiscvThreeStages{}.

\item Differently from \DarkRiscv{},  \SodorTwoStages{} only satisfies the weaker \PcContract{}+\BranchContract{} contract, which additionally exposes the outcome of branch instructions. %
This arises from the processor employing a simple form of branch prediction, which predicts that the branch is always not taken. This results in a timing leak because mispredictions trigger a pipeline flush.
Consider the following instruction (returned by \tool{} as a counterexample when trying to prove satisfaction against \PcContract{}) $i \eqdef \var{beq} \ t_1 \ t_2 \ \mathit{pc}+4$ at address $\mathit{pc}$, which conditionally jumps to $\mathit{pc}+4$ if registers $t_1$ and $t_2$ have the same value. 
The next instruction will \emph{always} be the one at address \var{pc+4} (so, executions will be equivalent under contract \PcContract{}).
However, executing $i$ on \SodorTwoStages{} takes a different number of cycles depending on whether $t_1$ and $t_2$ are equal. 

\item For \IbexSmall{}, \tool{} can only prove security against the \PcContract{}+\BranchContract+\DivContract{}+\AlignedContract{} contract, which additionally exposes (a) whether the divisor in division and remainder instructions is 0 and (b) whether memory accesses are aligned.
The \DivContract{} is needed to capture that division and remainder operations take 1 cycle when the divisor is 0 or 37 cycles otherwise.
Moreover, the \AlignedContract{} contract is needed to capture that \Ibex{} handles memory accesses that are not aligned on word boundaries by performing two separate word-aligned memory accesses.
Note that the difference in complexity between \SodorTwoStages{} (a simple educational processor) and \IbexSmall{} (a production-quality processor) is reflected in the difference in the time taken by a single \textsc{LearnInv} iteration (1.1 versus 16 minutes on average) and by the larger lookahead (1 vs 38).

\item For \IbexCache{}, \tool{} can only prove security against the \PcContract{}+\BranchContract{}+\DivContract{}+\MemContract{} contract.
Differently from \IbexSmall{}, which is secure against the \PcContract{}+\BranchContract{}+\DivContract{}+\AlignedContract{} contract, \IbexCache{} needs the \MemContract{} contract that exposes the accessed memory addresses (rather than the alignment bit).
This reflects the effects of our single-line cache which requires 3 cycles for hits and 4 cycles for misses.
\item For \IbexMul{}, \tool{} can only prove security against the \PcContract{}+\BranchContract{}+\DivContract{}+\MulContract{}+\AlignedContract{} contract, which also exposes the operands of multiplication instructions (\MulContract).
This captures the effects of the non-constant-time multiplier used in \IbexMul{}, whose execution time is proportional to the logarithm of the multiplication operands.

\end{asparaitem}

\newcolumntype{R}[1]{>{\raggedleft\let\newline\\\arraybackslash\hspace{0pt}}m{#1}}

\begin{table}

    \caption{Verification results for our benchmarks. For each processor, the table indicates the strongest satisfied contract (i.e., the one exposing the least amount of information) against an attacker observing when instructions retire.%
}\label{fig:verification:results}
    \centering
    \scalebox{.9}{
\begin{tabular}{l c c c c}
\toprule
    \multirow{2}{*}{\textit{Processor}} & \textit{Strongest} & \textit{Verification time} & \textsc{LearnInv} & \multirow{2}{*}{$b$} \\
     & \textit{contract} & \textit{(in minutes)}  &  \textit{iterations} & \\
    \midrule
    \PipelineMul{}    & \MulContract{} & 1.5 &  10  & 33\\
    \DarkRiscvTwoStages{} & \PcContract{} & 7.2  &  52  & 2 \\
    \DarkRiscvThreeStages{} & \PcContract{} & 11.1  & 83   &  2\\ 
    \SodorTwoStages{} & \PcContract{}+\BranchContract{} & 97.8  &  85  & 1 \\
    \IbexSmall{} & \PcContract{}+\BranchContract{}+\DivContract{}+\AlignedContract{}  & 1479.4  &  90  & 38 \\
    \IbexCache{} & \PcContract{}+\BranchContract{}+\DivContract{}+\MemContract{} & 1396.7  & 67   & 38 \\
    \IbexMul{} & \PcContract{}+\BranchContract{}+\DivContract{}+\MulContract{}+\AlignedContract{}  & 1291.9 & 75   & 38 \\
    \bottomrule
\end{tabular}
    }
\end{table}

\paragraph{\textbf{Q2}: Impact of lookahead}\label{sec:evaluation:lookahead}
\tool{}'s verification queries are parametric in the lookahead $b$. 
A larger lookahead corresponds to stronger assumptions and may thus enable learning stronger invariants. %
This, however, comes at the cost of more complex queries to the SMT solver, which increases solving time.
To understand the impact of increasing $b$, we use \tool{} to analyze the \Sodor{} processor against the \PcContract{}+\BranchContract{} contract for different values of $b \in \{ 1, 2, 3, 5, 10, 20\}$.

\Cref{fig:verification:loakahead} reports the total verification time, the total number of iterations taken by the \textsc{LearnInv} sub-procedure for the base and induction steps (i.e., the number of issued SMT queries), the time per iteration, and the number of invariants learned.
Our results indicate that increasing the lookahead $b$ results in slower iterations of \textsc{LearnInv} and in more invariants.
For instance, increasing the bound from $1$ to $20$ results in increasing the iteration time from 1.15 to 8.29 minutes.
The total number of \textsc{LearnInv} iterations (and, thus, the total verification time), however, varies depending on which counterexamples the SMT solver returns.

\paragraph{\textbf{Q3}: Impact of decoupling}\label{sec:evaluation:reduction}
To understand the impact of checking microarchitectural contract satisfaction using our decoupling theorem versus checking contract satisfaction directly using an architectural model $\arch$ (c.f.~\Cref{def:contract-satisfaction}), we modified \tool{} to directly prove contract satisfaction according to \Cref{def:contract-satisfaction}. %
For this, we (1) replace the construction of the stuttering circuit $\stutteringProduct{\uarch}{\phi}$ with the product circuit $\product{\arch}{ \product{\arch}{\product{\uarch}{\uarch}} }$ and (2) modify the construction of the $\Psi_{\mathit{initial}}$ and $\Psi_{\mathit{contract}}$ formulae, whereas the rest (e.g., the \textsc{LearnInv} procedure) is the same.
We refer to this modified version of \tool{} as \FourWayTool{} \onlyShortVersion{(see \cite[Appendix D]{techReport})}\onlyTechReport{(see \Cref{alg:verification-4way} in appendix)}. %
Note that \FourWayTool{} and \tool{} prove different properties which, as stated in \Cref{thm:soundness}, are equivalent only for ISA-compliant designs.

We analyzed the \Sodor{} processor against the \PcContract{}+\BranchContract{} contract using both \tool{} and \FourWayTool{}.
We focused our analysis on \Sodor{} because it comes with a Verilog ISA model (i.e.,  1-stage \Sodor{}).

In our experiments, when using a lookahead of $b = 2$, \tool{} successfully proved that the \PcContract{}+\BranchContract{} contract is satisfied in $97.8$ minutes.
In contrast, \FourWayTool{} tool proved contract satisfaction in $33.5$ hours.
This illustrates that, even for the simple 2-stage \Sodor{} processor, directly proving \Cref{def:contract-satisfaction} is impractical.
It also confirms that our decoupling theorem is instrumental in enabling practical automated proofs of contract satisfaction for realistic hardware.

\begin{table}[t]
    \caption{Verification time for \Sodor{} against the \PcContract{}+\BranchContract{} contract for different lookaheads $b$.}\label{fig:verification:loakahead}
	\centering
    \scalebox{.92}{
        \begin{tabular}{r c c c c}
            \toprule
            \multirow{2}{*}{$b$} & \textit{Verification time} & \textit{\textsc{LearnInv}} & \textit{Time per iteration} & \textit{Number of}\\
             & \textit{ (in minutes)} & \textit{iterations} & \textit{(in minutes)} & \textit{invariants}\\
            \midrule
            1    & 97.8 & 85  & 1.15 & 1732\\
            2    & 81.9 & 63  &  1.30 & 1770\\
            3   & 123.5 & 81  &  1.52 & 1776\\
            5   &  102.9 & 59 &  1.74 & 1776\\
            10   & 174.4 & 63 &  2.77 & 1776\\
            20   & 497.6 & 60 &  8.29 & 1776\\
            \bottomrule
        \end{tabular}
    }
\end{table}

\section{Discussion}\label{sec:discussion}

\paragraph{Limitations}
Our formalization of leakage contracts and our notion of ISA compliance  impact both the decoupling theorem (\Cref{thm:contract-sat-preconditions}) as well as the microarchitectures and contracts supported by \tool{}.
In terms of microarchitectures, our notion of ISA compliance (\Cref{def:isa-compliance}) only applies to single-issue processors.
Supporting multi-issue processors requires an ISA compliance notion that accounts for retiring multiple instructions per cycle. 
In terms of leakage contracts, \tool{} targets  \emph{sequential} leakage contracts that only refer to ``architectural'' instructions.
We leave the support for leakage contracts that refer to transient instructions, like the speculative contracts from~\cite{contracts2021}, as future work.\looseness=-1

We also remark that (1) \tool{} currently lacks support for inputs, and (2) our formalization of attackers as combinatorial monitoring circuits limits \tool{} to reason about \emph{passive attackers} that can only observe (part of) a processor's microarchitecture during execution.
We leave corresponding extensions to future work.

\paragraph{Lookahead}
\tool{}'s verification approach is parametric in a lookahead $b \in \mathbb{N}^+$ which determines for how many cycles the contract-equivalence assumption needs to be unrolled in the verification queries issued by the \textsc{LearnInv} function in \Cref{alg:verification}.
The lookahead $b$ is used to expose contract observations (produced at retirement) for instructions that are in-flight.
In particular, it allows accounting
for microarchitectural differences at cycle $i$ that are later declassified by a contract observation produced at cycle (at most) $i+b$.
We remark that the choice of $b$ does not affect the soundness of \tool{} (cf.~\Cref{thm:soundness}), but it may affect the success of verification. 
For instance, verifying the satisfaction of the contract \excontract{} from \Cref{sec:overview} for the processor from \Cref{fig:running} requires $b = 33$ (see \Cref{fig:verification:results}), \ie verification fails for smaller bounds.
In our experiments, setting $b$ to the maximum number of cycles needed for instructions to traverse the pipeline (from fetch to retire) was always sufficient whenever contract satisfaction holds.

\paragraph{Leakage contracts and secure programming}
Leakage contracts may serve as a foundation for secure programming.
As shown by Guarnieri et al.~\cite{contracts2021}, ensuring at program level that secret data do not influence leakage contract traces is sufficient to ensure the absence of leaks at microarchitectural level for processors that satisfy the contract.
Thus, \tool{}'s verification results have direct implications for programmers.
As an example, for each processor in our evaluation (\Cref{sec:evaluation}), the strongest contract verified by \tool{}, reported in \Cref{fig:verification:results}, indicates which parts of a computation should not involve secrets to ensure leakage freedom.
For instance, secure programming for the contract \PcContract{}+\BranchContract{}+\DivContract{}+\MulContract{}+\AlignedContract{}, satisfied by \IbexMul, requires ensuring that secrets do not influence (i) the program's control-flow (\PcContract{}+\BranchContract), (ii) whether the divisor in \texttt{div} and \texttt{rem} instructions is $0$ (\DivContract{}),  (iii) the operands of \texttt{mul} and \texttt{imul} instructions (\MulContract), and (iv) the alignment of memory accesses (\AlignedContract{}).
\section{Related work}\label{sec:rw}

\paragraph{Hardware verification for security}
UPEC~\cite{fadiheh2022exhaustive} is an approach for detecting confidentiality violations in RTL circuits.
Similarly to \Cref{def:leakage-ordering}, the UPEC property is defined as a non-interference-style property over pairs of microarchitectural executions.
However, the security property verified in~\cite{fadiheh2022exhaustive} is fixed; it specifically focuses on microarchitectural leaks due to transitive execution; and it is not directly based on an ISA-level specification, \ie{} it does not correspond to a leakage contract in a straightforward manner.
In contrast, our approach directly supports leakage contracts defined at ISA-level.\looseness=-1

Bloem et al.~\cite{powercontracts} propose an approach for verifying power leakage models (formalized on top of the Sail domain specific language~\cite{conf/popl19/armstrong}) for RTL circuits, which  differs from \tool{} in two key ways:
\begin{inparaenum}
\item They target power side channels, whereas \tool{} focuses on software-visible microarchitectural leaks. 
This is reflected in different notions of contract satisfaction: the one from~\cite{powercontracts} is probabilistic and related to threshold non-interference, whereas ours is related to standard non-interference.
\item 
Their verification approach needs a user-provided simulation mapping that ``specifies for all registers in the hardware [..] a location in the contract modeling the hardware location''~\cite[\S3.4]{powercontracts} where a location is a register or an input.
Defining such a mapping can be challenging for complex processors, \eg{} registers of stateful microarchitectural components (like caches or predictors) may depend on multiple instructions.
\tool{} does not need such a mapping for checking contract satisfaction; it only needs (automatically synthesized or manually provided) candidate relational invariants over the microarchitecture.

\end{inparaenum}

{Iodine}~\cite{gleissenthall2019iodine} and {Xenon}~\cite{v2021solver} check if the execution time of an RTL circuit is \emph{input independent} given a partitioning of the circuit's inputs into secret and public.
This partitioning is too coarse to support leakage contracts, where the notion of what is ``secret'' depends on the executed instructions. %
Finally, secure Hardware Description Languages~\cite{zhang2015hardware,deng2019secchisel} aim at building secure processors by construction.
They require partitioning RTL registers and inputs into secret and public, which is too coarse-grained for leakage contracts.\looseness=-1

Knox~\cite{athalye2022verifying} is a verification approach for  hardware security modules (HSMs) that targets an HSM's  hardware and software components.
While leakage contracts capture a processor's security guarantees at ISA level, Knox focuses on ensuring that all components of an HSM are \emph{both}  functionally correct and leakage free.
Differently from \tool{}, Knox relies on a combination of annotations and interactive proofs.\looseness=-1

\paragraph{Hardware verification for functional correctness}
A multitude of approaches for verifying functional correctness of processors have been proposed~\cite{reid2016end,Huang19, Zeng21,burch1994automatic, khune2010automated,patankar1999formal,jhala12001microarchitecture}.
Some of these approaches adopt a notion of ISA compliance similar to \Cref{def:isa-compliance}.
For instance, Reid et al.~\cite{reid2016end} illustrate a verification approach (used internally at ARM) for checking compliance between a microarchitecture and a reference architectural model, where the notion of ISA compliance requires that all changes to the architectural state are reflected by a ``step'' of the reference model (similarly to \Cref{def:isa-compliance}).

The Instruction-Level Abstraction (ILA) project~\cite{Huang19, zhang2020synthesizing, Zeng21} aims to specify and verify instruction-level models of processors and accelerators. 
They present techniques for (1) checking whether an RTL implementation correctly implements an ILA model, (2) determining which parts of a processor's state are architectural~\cite{Zeng21}, and (3) deriving processor invariants~\cite{zhang2020synthesizing}.
Some of these techniques can help in \tool{}'s verification.
For instance,~\cite{Zeng21} can help in identifying the $\archVars$ and $\uarchVars$ sets, whereas~\cite{zhang2020synthesizing} can complement \tool{}'s invariant learning approach.

Finally, fuzzing approaches~\cite{280028,canakci2022processorfuzz} can detect violations of ISA compliance, but they cannot {prove} functional correctness.

\paragraph{Detecting leaks through testing}
Revizor~\cite{oleksenko2022revizor,oleksenko2023hide} and Scam-V~\cite{Nemati2020a,buiras2021micro} search for contract violations (\ie{} they find counterexamples to \Cref{def:contract-satisfaction}) for black-box CPUs.
However, they require physical access to a CPU and can be applied only post-silicon.
Other approaches~\cite{Weber2021,Gras2020,Moghimi2020a} instead detect leaks by analyzing hardware measurements without the help of a formal leakage model but, again, apply only post-silicon.
Finally, SpecDoctor~\cite{hur2022specdoctor} and SigFuzz~\cite{rajapakshasigfuzz} can test for leaks on RTL designs and they are applicable in the pre-silicon phase.
Differently from \tool{}, all these approaches \emph{cannot} prove the absence of leaks.

\paragraph{Formal leakage models}
Researchers have proposed many formal models for studying microarchitectural security at program level, ranging from simple models associated with ``constant-time programming''~\cite{almeida2016verifying,molnar2005program} to more complex ones capturing leaks associated with speculatively executed instructions~\cite{spectector2020,patrignani2021exorcising,fabian202automatic,pitchfork, GuancialeBD20, blade}.
Most of these models focus at the software level and have no formal connection with leaks in hardware implementations.
In contrast, 
 \cite{contracts2021,mosier2022axiomatic} propose frameworks for formalizing security contracts between hardware and software.
Our notion of contract satisfaction (\Cref{def:contract-satisfaction}) is inspired by the framework from~\cite{contracts2021}, which we instantiate and adapt for reasoning about RTL processors.

\section{Conclusion}\label{sec:conclusions}

We presented an approach for verifying RTL processor designs against ISA-level leakage contracts.
We implemented our approach in the \tool{} verification tool, which we use to  characterize the side-channel security guarantees of three open-source RISC-V processors.
This demonstrates that leakage contracts can be successfully applied to RTL processor designs.
It also paves the way for linking recent advances on specification~\cite{contracts2021,mosier2022axiomatic} and software analysis~\cite{spectector2020,fabian202automatic,pitchfork, GuancialeBD20, blade} for leakage contracts to RTL processor designs. %

\begin{acks}
We would like to thank Alastair Reid and Piotr Sapiecha for feedback and discussions.
This project has received funding from the \grantsponsor{1}{European Research Council}{https://erc.europa.eu/} under the European Union's Horizon 2020 research and innovation programme (grant agreement No. \grantnum{1}{101020415}), 
from the \grantsponsor{2}{Spanish Ministry of Science and Innovation}{https://www.ciencia.gob.es/} under the project \grantnum{2}{TED2021-132464B-I00 PRODIGY}, 
from the \grantsponsor{3}{Spanish Ministry of Science and Innovation}{https://www.ciencia.gob.es/} under the Ram\'on y Cajal grant \grantnum{3}{RYC2021-032614-I}, 
from the \grantsponsor{3}{Spanish Ministry of Science and Innovation}{https://www.ciencia.gob.es/} under the project \grantnum{3}{PID2022-142290OB-I00 ESPADA}, 
and from a gift by Intel Corporation.\looseness=-1
\end{acks}

\bibliographystyle{IEEEtran}
\balance
\bibliography{references}

\onlyTechReport{
\clearpage
\appendix

\section{$\lang{}$ semantics}\label[appendix]{app:semantics}

The full semantics of $\lang{}$ is given in \Cref{figure:language:semantics}.
In the figure, $\emptysequence$ denotes the empty sequence and $\concat$ denotes the concatenation operator. 
As is standard, we have that $\tau \concat \emptysequence = \tau$ and $\emptysequence \concat \tau = \tau$.
As mentioned in \Cref{sec:model:language}, the trace semantics extends the notion of valuation to refer to both registers and wire variables in a circuit's output as shown in the definition of the projection $\project{\mu}{C}(i)$.

\begin{figure}
    \begin{align*}
        \hwEval{C}{\mu} &:= \{ (r,\hwEvalC{e}{\mu}{C}) \in \Var \times \Val \mid \passign{r}{e} \in C.A \} \\
        & \quad \cup \{ (r,\mu(r)) \in \Var \times \Val \mid r \not\in \writeVars{C} \}\\
        \hwEvalC{n}{\mu}{C} &:= n \\
        \hwEvalC{r}{\mu}{C} &:= \mu(r) \\
        \hwEvalC{v}{\mu}{C} &:= \hwEvalC{e}{\mu}{C}\ \text{where}\ v = e \in \wires{C} \\
        \hwEvalC{\unaryOp{e}}{\mu}{C} &:= \unaryOp{(\hwEvalC{e}{\mu}{C}) }\\
        \hwEvalC{\binaryOp{e_1}{e_2}}{\mu}{C} &:= \binaryOp{(\hwEvalC{e_1}{\mu}{C})}{(\hwEvalC{e_2}{\mu}{C})}\\
        \hwEvalC{\ite{e_1}{e_2}{e_3}}{\mu}{C} &:= 
            {
                \begin{cases}
                    \hwEvalC{e_2}{\mu}{C} & \text{if}\ \hwEvalC{e_1}{\mu}{C} \neq 0\\
                    \hwEvalC{e_3}{\mu}{C} & \text{if}\ \hwEvalC{e_1}{\mu}{C} = 0
                \end{cases}
            }\\
        \hwEvalC{e_1[e_2:e_3]}{\mu}{C} &:= (\hwEvalC{e_1}{\mu}{C})[ (\hwEvalC{e_2}{\mu}{C}) : (\hwEvalC{e_3}{\mu}{C})]\\[3mm]
        \hwEval{C}{\mu, 0} &:= \mu\\
        \hwEval{C}{\mu, n+1} &:= \hwEval{C}{\hwEval{C}{\mu, n}} \\[3mm]
        \hwEvalInfty{C}{\mu} &:= \mu_0 \concat \mu_1 \concat  \mu_2  \concat \dots \\
        & \text{where}\ \mu_i = \project{\hwEval{C}{\mu, i}}{C} \\[3mm]
        \hwEvalInftyFilter{C}{\mu}{\phi} &:= \mu_0 \concat \mu_1 \concat \mu_2 \concat \dots\\
                & \text{where}\ \mu_i =   \project{\hwEval{C}{\mu, i}}{C}\ \text{if}\ 
                C, \hwEval{C}{\mu, i} \models \phi \\
                & \text{and}\ \mu_i = \emptysequence\ \text{otherwise}\\[3mm]
        C, \val \models \phi &\text{ iff } \hwEvalC{\phi}{\val}{C} \neq 0 \\[3mm]
        \project{\mu}{V}(r) &:= {
            \begin{cases}
                \mu(r) & \text{if}\ i \in V \\  
                \bot & \text{otherwise}
            \end{cases}
        }      
        \\[3mm]
        \project{\mu}{C}(i) &:= {
                \begin{cases}
                    \mu(i) & \text{if}\ i \in \outputs{C} \cap \Var \\
                    \hwEvalC{i}{\mu}{C} & \text{if}\ i \in \outputs{C} \cap \WireVars \\
                    \bot   & \text{otherwise}
                \end{cases}
        } 
        \\[3mm]
        \val \equivVars{V} \val' & \text{ iff } \qquad  \project{\val}{V} = \project{\val'}{V}
    \end{align*}
    \vspace{-15pt}
    \caption{$\lang$ semantics. 
    }\label{figure:language:semantics}
    \end{figure}
\section{A temporal logic for $\lang{}$}\label[appendix]{app:temporal-logic}

Here, we introduce a logic for expressing temporal properties of $\lang$ circuits.
Formulas $\Phi$ in this logic are constructed by combining $\lang$ predicates $\phi$ with temporal operators $\future$ (denoting ``in the next cycle''), 
$\boundedFuture{B}$ (denoting ``for the next $B$ cycles''), 
$\alwaysFuture$ (denoting ``always in the future''),  
and the usual boolean operators. %
Its semantics is the following: 
\begin{align*}
    C & \models \Phi  \text{ if } C,\mu,0 \models \Phi \text{ for all } \mu \in \states{C}\\
    C,\mu,i & \models \phi  \text{ if } C, \hwEval{C}{\mu, i} \models \phi \\
    C,\mu,i & \models \future \Phi  \text{ if } C,\mu,i+1 \models \Phi \\
    C,\mu,i & \models \boundedFuture{k} \Phi \text{ if } C,\mu,i+j \models \Phi \text{ for all } j < k\\ 
    C,\mu,i & \models \alwaysFuture \Phi  \text{ if } C,\mu,i+j \models \Phi \text{ for all } j \in \Nat\\
\end{align*}

\section{Proofs}\label[appendix]{app:proofs}

Here, we present the proofs of \Cref{thm:contract-sat-preconditions} and \Cref{thm:soundness}, which we restate below for simplicity.

\begin{theorem*}[Decoupling Theorem]
If $\isasat{\uarch}{\arch}{\phi}$ holds for retirement predicate $\phi$, then
\[ \leakorder{ \contract }{ \atk }{\uarch}{\uarchVars, \phi} \Leftrightarrow \ctrsat{ \compose{\contract}{\arch}}{ \compose{\atk}{\uarch}}. \]
    \end{theorem*}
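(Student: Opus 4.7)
The plan is to exploit the structural parallel between the two sides of the biconditional. Both $\leakorder{\contract}{\atk}{\uarch}{\uarchVars, \phi}$ and $\ctrsat{\compose{\contract}{\arch}}{\compose{\atk}{\uarch}}$ are universal statements over pairs $\mu, \mu' \in \initStates{\uarch}$ with $\mu \equivVars{\uarchVars} \mu'$, and they share exactly the same conclusion, namely $\hwEvalInfty{\compose{\atk}{\uarch}}{\mu} = \hwEvalInfty{\compose{\atk}{\uarch}}{\mu'}$. It therefore suffices to prove that, under the hypothesis $\isasat{\uarch}{\arch}{\phi}$, the two premises are equivalent:
\[ \hwEvalInftyFilter{\compose{\contract}{\uarch}}{\mu}{\phi} = \hwEvalInftyFilter{\compose{\contract}{\uarch}}{\mu'}{\phi} \iff \hwEvalInfty{\compose{\contract}{\arch}}{\mu} = \hwEvalInfty{\compose{\contract}{\arch}}{\mu'}. \]
I will derive this from the following key lemma: for every $\mu \in \initStates{\uarch}$, $\hwEvalInftyFilter{\compose{\contract}{\uarch}}{\mu}{\phi} = \hwEvalInfty{\compose{\contract}{\arch}}{\project{\mu}{\archVars}}$. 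Note that $\project{\mu}{\archVars} \in \initStates{\arch}$ by the well-formedness requirement imposed on initial microarchitectural states in \Cref{sec:model:arch-and-uarch}, so the right-hand side is well defined. The biconditional then follows immediately by chaining the lemma for $\mu$ and $\mu'$ and noting that $\mu \equivVars{\uarchVars} \mu'$ does not constrain the architectural parts.

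\textbf{Proving the key lemma.} The argument has two ingredients. First, because $\contract$ is a \emph{combinatorial} monitor for $\arch$, it has no registers of its own ($\writeVars{\contract} = \emptyset$) and $\readVars{\contract} \subseteq \vars{\arch} = \archVars$; by induction on the structure of the expressions defining $\contract$'s wires and outputs, for any valuation $\nu$ over $\vars{\uarch}$ the output of $\compose{\contract}{\uarch}$ at $\nu$ equals the output of $\compose{\contract}{\arch}$ at $\project{\nu}{\archVars}$. In other words, $\contract$'s instantaneous output depends only on the architectural projection of the current state. Second, \Cref{def:isa-compliance}(1) yields that $\hwEvalInftyFilter{\uarch}{\mu}{\phi} \equivVars{\archVars} \hwEvalInfty{\arch}{\project{\mu}{\archVars}}$, so after projecting to $\archVars$ the filter points of $\uarch$'s run align one-to-one with the cycles of $\arch$'s run and produce the same architectural states at matching positions. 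Combining these two facts, the $i$-th entry of $\hwEvalInftyFilter{\compose{\contract}{\uarch}}{\mu}{\phi}$ equals the $i$-th entry of $\hwEvalInfty{\compose{\contract}{\arch}}{\project{\mu}{\archVars}}$, proving the lemma.

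\textbf{Main obstacle.} The technically delicate step is the ``output depends only on the architectural projection'' claim, which requires carefully unfolding the $\lang{}$ semantics of wires and ensuring that no wire transitively reachable from an output of $\contract$ depends on a register in $\uarchVars$. This is a straightforward structural induction using $\readVars{\contract} \subseteq \archVars$ and the assumption that each variable has at most one defining wire, but it is easy to be imprecise about it. A secondary subtlety is well-definedness of the pointwise alignment: we need $\phi$ to be satisfied infinitely often along $\uarch$'s run from $\mu$ so that the filtered trace has the same length as $\arch$'s infinite trace. This, however, is a direct consequence of \Cref{def:isa-compliance}(1), since the right-hand side $\hwEvalInfty{\arch}{\project{\mu}{\archVars}}$ is infinite. \Cref{def:isa-compliance}(2) is not needed for the main lemma itself, but it underlies condition~(1) by ensuring that non-filter cycles do not introduce spurious architectural changes that would break the correspondence.
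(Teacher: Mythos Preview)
Your proposal is correct and follows the same approach as the paper's proof: both directions reduce to the equivalence of the two premises, which in turn rests on the fact that $\hwEvalInftyFilter{\compose{\contract}{\uarch}}{\mu}{\phi}$ and $\hwEvalInfty{\compose{\contract}{\arch}}{\mu}$ coincide under ISA compliance. The paper simply asserts this step (``Since $\isasat{\uarch}{\arch}{\phi}$ and $\contract$ is a monitoring circuit for $\arch$, we get \ldots'') without unpacking it; your key lemma and the structural-induction argument for combinatorial monitors supply exactly the justification the paper omits.
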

    
    \begin{proof}
    We assume $\isasat{\uarch}{\arch}{\phi}$ and prove the two directions.
    
    \paragraph{$ \boldsymbol{\Rightarrow}$}
    Let $\val,\val' \in \initStates{\uarch}$ be  such that $\val \equivVars{\uarchVars} \val'$ and $$\hwEvalInfty{ \compose{\contract}{\arch} }{ \val } = \hwEvalInfty{ \compose{\contract}{\arch} }{ \val' }.$$
    Since $\isasat{\uarch}{\arch}{\phi}$ and $\contract$ is a monitoring circuit for $\arch$, we get $$ \hwEvalInftyFilter{ \compose{\contract}{\uarch} }{ \val }{ \phi } = \hwEvalInftyFilter{ \compose{\contract}{\uarch} }{\val'}{\phi}. $$
    From $ \leakorder{ \contract }{ \atk }{\uarch}{\uarchVars, \phi}$, we get  $$ \hwEvalInfty{ \compose{\atk}{\uarch} }{ \val }= \hwEvalInfty{ \compose{\atk}{\uarch} }{\val'}. $$
    Therefore, $\ctrsat{ \compose{\contract}{\arch}}{ \compose{\atk}{\uarch} }$.
    
    \paragraph{$ \boldsymbol{\Leftarrow}$}
    Let $\val,\val' \in \initStates{\uarch}$ be such that $\val \equivVars{\uarchVars} \val'$ and $$ \hwEvalInftyFilter{ \compose{\contract}{\uarch} }{ \val }{ \phi } = \hwEvalInftyFilter{ \compose{\contract}{\uarch} }{\val'}{\phi}. $$
    Since $\isasat{\uarch}{\arch}{\phi}$ and $\contract$ is a monitoring circuit for $\arch$, we get $$\hwEvalInfty{ \compose{\contract}{\arch} }{ \val } = \hwEvalInfty{ \compose{\contract}{\arch} }{ \val' }.$$
    From $ \ctrsat{ \compose{\contract}{\arch}}{ \compose{\atk}{\uarch}}$, we get  $$ \hwEvalInfty{ \compose{\atk}{\uarch} }{ \val }= \hwEvalInfty{ \compose{\atk}{\uarch} }{\val'}. $$
    Therefore, $\leakorder{ \contract }{ \atk }{\uarch}{\uarchVars, \phi}$.
    \end{proof}

\begin{theorem*}
    $\textsc{Verify}(\uarch$, $\contract$, $\atk$, $\phi$,  $ b $, $ \RelInvs) \Rightarrow \leakorder{ \contract }{ \atk }{\uarch}{\uarchVars, \phi}$
\end{theorem*}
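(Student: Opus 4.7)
The plan is to decompose the soundness proof into three pieces: (i) soundness of the \textsc{LearnInv} subroutine, (ii) a structural correspondence between pairs of original-circuit executions and executions of the stuttering product circuit, and (iii) a final bridge from equivalence on the stuttering product back to equivalence of the unfiltered attacker traces of $\uarch$.

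First I would prove a soundness lemma for \textsc{LearnInv}: whenever it returns $LI$ on inputs $(C, \Phi_{\mathit{initial}}, \Phi_{\mathit{assumption}}, b, CI)$, then $C \models (\Phi_{\mathit{initial}} \wedge \alwaysFuture \Phi_{\mathit{assumption}}) \to \alwaysFuture \bigwedge LI$. Termination of the base-case loop yields the validity of $\Psi_{\mathit{base}}$, which, combined with the fact that $\alwaysFuture \Phi_{\mathit{assumption}}$ entails $\boundedFuture{b} \Phi_{\mathit{assumption}}$ at every cycle, gives $\bigwedge LI$ at cycle $0$; termination of the inductive loop yields $\Psi_{\mathit{inductive}}$, which preserves $\bigwedge LI$ across every transition under the same assumption. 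A straightforward induction on the cycle index then gives the lemma.

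Next, fix initial valuations $\val, \val' \in \initStates{\uarch}$ with $\val \equivVars{\uarchVars} \val'$ and $\hwEvalInftyFilter{\compose{\contract}{\uarch}}{\val}{\phi} = \hwEvalInftyFilter{\compose{\contract}{\uarch}}{\val'}{\phi}$. I construct a single valuation $\mu_{12}$ for $\stutteringProduct{\uarch}{\phi}$ by setting $\mu_{12}(\runA{x}) = \val(x)$ and $\mu_{12}(\runB{x}) = \val'(x)$, so $\mu_{12} \models \Phi_{\mathit{initial}}$. The crucial structural lemma says that the stuttering product interleaves the two original executions so that the cycles at which $\runA{\phi} \wedge \runB{\phi}$ both hold enumerate, in order, the synchronization points of the two filtered traces, and at those cycles the $\runA{\cdot}$ and $\runB{\cdot}$ valuations coincide with the original-circuit valuations at the corresponding retirement steps. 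By the equal-filtered-traces hypothesis the contract outputs then agree at every such synchronization cycle, so $\alwaysFuture \Phi_{\mathit{ctr-equiv}}$ holds along the execution from $\mu_{12}$; combined with the \textsc{LearnInv} lemma, this gives $\alwaysFuture \bigwedge LI$.

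Finally, the check at line~\ref{line:alg-verification:security-check} makes $\psi_{\mathit{equiv}}^{\atk} \wedge (\runA{\phi} \leftrightarrow \runB{\phi})$ hold at every reachable state of $\stutteringProduct{\uarch}{\phi}$ along this execution. From the second conjunct, no stuttering ever occurs on reachable states, so the two copies advance in lockstep and the $\runA{\cdot}$- and $\runB{\cdot}$-projections of the stuttering execution coincide cycle-by-cycle with $\hwEvalInfty{\uarch}{\val}$ and $\hwEvalInfty{\uarch}{\val'}$, respectively. The first conjunct then delivers $\hwEvalInfty{\compose{\atk}{\uarch}}{\val} = \hwEvalInfty{\compose{\atk}{\uarch}}{\val'}$, which is exactly $\leakorder{\contract}{\atk}{\uarch}{\uarchVars, \phi}$. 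The main obstacle is the structural lemma for $\stutteringProduct{\uarch}{\phi}$: one must verify that its guarded updates genuinely align the two runs at $\phi$-retirement points, that equality of filtered traces lifts to cycle-wise contract-output equality at those synchronization cycles, and that whenever $\runA{\phi} \leftrightarrow \runB{\phi}$ holds throughout the reachable part of the execution, the composite faithfully simulates both unfiltered original executions.
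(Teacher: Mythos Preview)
Your proposal is correct and follows essentially the same approach as the paper: both establish soundness of \textsc{LearnInv} by induction on the cycle index, then use the returned invariants together with the line-\ref{line:alg-verification:security-check} check to obtain $\psi_{\mathit{equiv}}^{\atk}$ and $\runA{\phi} \leftrightarrow \runB{\phi}$ on the stuttering product, and finally use the latter to collapse the stuttering product to the ordinary product and transfer attacker equivalence back to $\uarch$. The only cosmetic differences are that the paper argues the \textsc{Verify} part by contradiction while you argue it directly, and that you are more explicit than the paper about the structural correspondence lemma for $\stutteringProduct{\uarch}{\phi}$ (which the paper asserts without proof in its step~(3) and step~(4)).
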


\renewcommand{\CandInvs}{\mathit{CI}}

\begin{proof}
We split the proof in two steps.

\paragraph{Soundness of \textsc{LearnInv}}
Here, we show that the outcome of invariant learning, $LI := \textsc{LearnInv}(C, \Phi_{\mathit{initial}}, \Phi_{assumption}, b, \CandInvs)$ is a set of  invariants of $C$ in all executions that satisfy $\Phi_{assumption}$ in every cycle.
In other words, $C \models (\Phi_{\mathit{initial}} \wedge \alwaysFuture \Phi_{assumption}) \to \alwaysFuture \bigwedge LI$.

Let $\val$ be an arbitrary valuation for $C$ such that (a) $C, \val, 0 \models \Phi_{\mathit{initial}}$ and (b) $C, \val, 0 \models  \alwaysFuture \Phi_{assumption}$.
We now show, by induction on $i$, that $C, \val, i \models \bigwedge LI$.
\begin{description}
\item[Base case:] 
We need to show that $C, \val, 0 \models \bigwedge LI$ holds.
Since $LI$ has been returned by \textsc{LearnInv}, we know that $C \models \Psi_{\mathit{base}}$ holds for a set of invariants $CI \supseteq LI$.
From this, we have that $C, \val, 0 \models (\Phi_{\mathit{initial}} \wedge \boundedFuture{b} \Phi_{\mathit{assumption}}) \to \bigwedge LI$.
From (a) and (b), we get $C, \val, 0 \models \Phi_{\mathit{initial}} \wedge \boundedFuture{b} \Phi_{\mathit{assumption}}$.
Thus we can conclude $C, \val, 0 \models \bigwedge LI$.

\item[Induction step:]
We now show that $C, \val, i \models \bigwedge LI$ holds given that $C, \val, j \models \bigwedge LI$ holds for all $j < i$.
Let $\val'$ be the valuation reached in $i-1$ steps from $\val$.
Since $LI$ has been returned by \textsc{LearnInv}, we know that $C \models \Psi_{\mathit{inductive}}$ holds for $\CandInvs = LI$.
From this, we have that $C, \val', 0 \models (\bigwedge LI \wedge \boundedFuture{b} \Phi_{\mathit{assumption}})  \to \future \bigwedge LI$ holds.
From the induction hypothesis, we have that $C, \val, i-1 \models \bigwedge LI$ holds and, therefore, we get $C, \val', 0 \models \bigwedge LI$.
From $C, \val, 0 \models \alwaysFuture \Phi_{assumption}$, we also get $C, \val', 0 \models \boundedFuture{b} \Phi_{\mathit{assumption}}$.
Therefore, we can derive $C, \val', 0 \models \future \bigwedge LI$.
From this, we get $C, \val', 1 \models \bigwedge LI$.
From this and $\val'$ being reached from $\val$ in $i-1$ steps, we get $C, \val, i \models \bigwedge LI$.
\end{description}

\paragraph{Soundness of \textsc{Verify}}
Assume, for contradiction's sake, that $\textsc{Verify}(\uarch$, $\contract$, $\atk$, $\phi$, $ \archVars $, $ b$, $ \RelInvs) = \top$ and  $\leakorder{ \contract }{ \atk }{\uarch}{\uarchVars, \phi}$ does not hold.
From the latter, there are two $\uarchVars$-equivalent initial valuations $\val,\val'$ such that $\hwEvalInftyFilter{\compose{\contract}{\uarch}}{\mu}{\phi}$ $=$ $\hwEvalInftyFilter{\compose{\contract}{\uarch}}{\mu'}{\phi}$ and $\hwEvalInfty{\compose{\atk}{\uarch}}{ \mu }$ $\neq$ $\hwEvalInfty{\compose{\atk}{\uarch}}{ \mu'}$.
Thus:
\begin{enumerate}
\item From $\mu,\mu'$ being initial valuations, we have $\stutteringProduct{\uarch}{\phi}, \mu \times \mu', 0 \models \runA{\psi_{init}^{\uarch}} \wedge \runB{\psi_{init}^{\uarch}}$, where $\mu \times \mu'$ is the valuation defined as $\mu \times \mu'(\runA{v}) = \mu(v)$ and $\mu \times \mu'(\runB{v}) = \mu'(v)$ for all $v \in \vars{\uarch}$.
\item From $\mu,\mu'$ being $\uarchVars$-equivalent, we have $\stutteringProduct{\uarch}{\phi}, \mu \times \mu', 0 \models \psi_{\mathit{equiv}}^{\uarchVars}$.
\item From $\hwEvalInftyFilter{\compose{\contract}{\uarch}}{\mu}{\phi} = \hwEvalInftyFilter{\compose{\contract}{\uarch}}{\mu'}{\phi}$, we have that $\stutteringProduct{\uarch}{\phi}, \mu \times \mu', 0 \models \alwaysFuture \Phi_{\mathit{ctr}}$.
\item Finally, from $\hwEvalInfty{\compose{\atk}{\uarch}}{ \mu }$ $\neq$ $\hwEvalInfty{\compose{\atk}{\uarch}}{ \mu'}$, we have that  $\product{\uarch}{\uarch}, \mu \times \mu', 0 \not\models \alwaysFuture \psi_{\mathit{equiv}}^\atk$.
Moreover, from $\textsc{Verify}(\uarch$, $\contract$, $\atk$, $\phi$, $ b$, $ \RelInvs) = \top$, we have that  $\stutteringProduct{\uarch}{\phi} \models \alwaysFuture \runA{\psi} \leftrightarrow \runB{\psi}$.
Therefore, from $\product{\uarch}{\uarch}, \mu \times \mu', 0 \not\models \alwaysFuture \psi_{\mathit{equiv}}^\atk$, we also get $\stutteringProduct{\uarch}{\phi}, \mu \times \mu', 0 \not\models \alwaysFuture \psi_{\mathit{equiv}}^\atk$ because the stuttering never happens since $\phi$ is always synchronized in the two executions. 
\end{enumerate}
Moreover, from $\textsc{Verify}(\uarch$, $\contract$, $\atk$, $\phi$, $ b$, $ \RelInvs) = \top$, there is a set $\CandInvs$ returned by \textsc{LearnInv} such that $\stutteringProduct{\uarch}{\phi} \models \bigwedge_{\psi \in \CandInvs} \psi \to \psi^{\atk}_{\mathit{equiv}}$.
From (1)--(3) and the soundness of \textsc{LearnInv} (proved above), we have that $\stutteringProduct{\uarch}{\phi}, \mu \times \mu', 0 \models \alwaysFuture \bigwedge_{\psi \in \CandInvs} \psi$.
From this and $\stutteringProduct{\uarch}{\phi} \models \bigwedge_{\psi \in \CandInvs} \psi \to \psi^{\atk}_{\mathit{equiv}}$, we have $\stutteringProduct{\uarch}{\phi}, \mu \times \mu', 0 \models \alwaysFuture \psi^{\atk}_{\mathit{equiv}}$, which contradicts (4).\looseness=-1
\end{proof}

\section{\FourWayTool{} verification approach}

The approach from \FourWayTool{} (used in \Cref{sec:evaluation:reduction}) is given in \Cref{alg:verification-4way}.

\newcommand{\runX}[2]{{#1}^{#2}}

\begin{algorithm}[t]
    \caption{\FourWayTool{} verification approach}\label{alg:verification-4way}
\begin{algorithmic}[1]
	\Require Microarchitecture $\uarch$, architecture $\arch$, leakage monitor $\contract$, attacker $\atk$, retirement predicate $\phi$, lookahead $b$, candidate invariants $CI$\looseness=-1
		\Statex{}

    \Procedure{Verify}{$\uarch , \contract, \atk, \phi, b, CI$}\label{line:alg-verification-4way:begin-verify}
        \State{$\psi_{\mathit{equiv}}^{\uarchVars} := \bigwedge_{r \in \uarchVars} \runX{r}{3} = \runX{r}{4}$}
        \State{$\psi_{\mathit{archEquiv}} := \bigwedge_{r \in \archVars} \runX{r}{1} = \runX{r}{3} \wedge \runX{r}{2} = \runX{r}{4}$}
        \State{$\Phi_{\mathit{initial}} := \runX{\psi_{\mathit{init}}^{\arch}}{1} \wedge \runX{\psi_{\mathit{init}}^{\arch}}{2} \wedge  \runX{\psi_{\mathit{init}}^{\uarch}}{3} \wedge \runX{\psi_{\mathit{init}}^{\uarch}}{4} \wedge \psi_{\mathit{archEquiv}} \wedge \psi_{\mathit{equiv}}^{\uarchVars} $}\label{line:alg-verification-4way:initial-formula}
        \State{$\Phi_{\mathit{ctr-equiv}} := \bigwedge_{o \in \outputs{\contract}} \runX{o}{1} = \runX{o}{2}$}\label{line:alg-verification-4way:precondition-formula}
        \State{$LI := \textsc{LearnInv}(\product{\arch}{\product{\arch}{\product{\uarch}{\uarch}}} , \Phi_{\mathit{initial}}, \Phi_{\mathit{ctr-equiv}}, b, CI)$}\label{line:alg-verification-4way:learninv-call}
        \State{\Return{$\product{\arch}{\product{\arch}{\product{\uarch}{\uarch}}}\models \bigwedge LI \to \bigwedge_{o \in \outputs{\atk}} \runX{o}{3} = \runX{o}{4} $}}\label{line:alg-verification-4way:security-check} 
    \EndProcedure\label{line:alg-verification-4way:end-verify}

\end{algorithmic}
\end{algorithm}
}
\end{document}